\definecolor{lightgrey}{rgb}{0.85,0.85,0.85}
\theoremstyle{plain}
\titlespacing{\section}{0pt}{*0.9}{*0.9}
\titlespacing{\subsection}{0pt}{*0.8}{*0.8}
\titlespacing{\subsubsection}{0pt}{*0.8}{*0.8}
\newcommand{\bv}{\begin{array}}
\newcommand{\ev}{\end{array}}
\newcommand{\bit}{\begin{itemize}}
\newcommand{\eit}{\end{itemize}}
\newcommand{\ben}{\begin{enumerate}}
\newcommand{\een}{\end{enumerate}}
\newcommand{\beq}{\begin{equation}}
\newcommand{\eeq}{\end{equation}}
\newcommand{\bvq}{\begin{eqnarray}}
\newcommand{\evq}{\end{eqnarray}}
\newcommand{\diag}[1]{\mathrm{diag}\left(#1\right)}
\DeclareMathOperator{\Unif}{Unif}
\DeclareMathOperator{\Var}{\mathbb{V}ar}
\DeclareMathOperator{\Cov}{\mathbb{C}ov}
\newtheorem{theorem}{Theorem}
\begin{document}
\pagestyle{empty}

\title{Estimating latent processes on a network\\ from indirect measurements}
\author{ Edoardo M. Airoldi\thanks{Edoardo M.~Airoldi is an Assistant Professor of Statistics at Harvard University, a Principal Investigator at the Broad Institute of MIT \& Harvard, and an Alfred P. Sloan Research Fellow (airoldi@fas.harvard.edu). Alexander W.~Blocker is a PhD candidate in the Department of Statistics at Harvard University (ablocker@fas.harvard.edu).} \and Alexander W. Blocker}
\date{}

\maketitle
\thispagestyle{empty}

\newpage

\begin{abstract}
In a communication network, point-to-point traffic volumes over time are critical for designing protocols that route information efficiently and for maintaining security, whether at the scale of an internet service provider or within a corporation. 
While technically feasible, the direct measurement of point-to-point traffic imposes a heavy burden on network performance and is typically not implemented. 
Instead, indirect aggregate traffic volumes are routinely collected.
We consider the problem of estimating point-to-point traffic volumes, $\bm x_t$, from aggregate traffic volumes, $\bm y_t$, given information about the network routing protocol encoded in a matrix $A$.
This estimation task can be reformulated as finding the solutions to a sequence of {\em ill-posed} linear inverse problems, $\bm y_t= A\, \bm x_t$, since the number of origin-destination routes of interest is higher than the number of aggregate measurements available.

Here, we introduce a novel multilevel state-space model of aggregate traffic volumes with realistic features.
We implement a na\"ive strategy for estimating unobserved point-to-point traffic volumes from indirect measurements of aggregate traffic, based on particle filtering.
We then develop a more efficient two-stage inference strategy that relies on model-based regularization: a simple model is used to calibrate regularization parameters that lead to efficient/scalable inference in the multilevel state-space model.
We apply our methods to corporate and academic networks, where we show that the proposed inference strategy outperforms existing approaches and scales to larger networks.
We also design a simulation study to explore the factors that influence the performance.
Our results suggest that model-based regularization may be an efficient  strategy for inference in other complex multilevel models.\newline

\vfill

\noindent\textbf{Keywords:} ill-posed inverse problem; polytope sampling; particle filtering; approximate inference; multi-stage estimation; multilevel state-space model; stochastic dynamics; network tomography; origin-destination traffic matrix.
\end{abstract}

\singlespacing 

\newpage
\tableofcontents

\onehalfspacing

\newpage

\pagestyle{fancy}
\setcounter{page}{1}

\section{Introduction}
\label{sec:intro}

A pervasive challenge in multivariate time series analysis is the estimation of non-observable time series of interest $\{\bm x_{t} : t = 1 \ldots T \}$ from indirect noisy measurements $\{\bm y_{t} : t = 1 \ldots T \}$, typically obtained through an aggregation or mixing process, $\bm y_t = \bm a (\bm x_t) ~ \forall t$.
The inference problem that arises in this setting is often referred to as an {\em inverse}, or {\em deconvolution}, problem \citep[e.g.,][]{Hans:1998,case:berg:2001,meis:2009} in the statistics and computer science literatures, and qualified as {\em ill-posed} because of the lower dimensionality of the measurement vectors with respect to the non-observable estimands of interest.
Ill-posed inverse problems lie at the heart of a number of modern applications, including image super-resolution and positron emission tomography where we want to combine many 2D images in a 3D image consistent with 2D constraints \citep{shep:krus:1978,vard:shep:kauf:1985}; blind source separation where there are more sound sources than sound tracks (i.e., the measurements) available \citep{Liu:Chen:1995,lee:lewi:giro:sejn:1999,parr:sajd:2003}; and inference on cell values in contingency tables where two-way or multi-way margins are pre-specified \citep{Bish:Fien:Holl:1975,dobr:teba:west:2006}.

We consider a setting in which high-dimensional multivariate time series $\bm x_{1:T}$ mix on a network. 
 Individual time series correspond to traffic directed from a node to another. 
 The aggregation process encodes the routing protocol---whether deterministic of probabilistic---that determines the path traffic from any given source follows to reach its destination.
 This type of mixing can be specified as a linear aggregation process $A$. 
This problem setting leads to the following sequence of ill-posed linear inverse (or deconvolution) problems, 
\begin{equation}
 \label{eq:linearproblem}
 \bm y_t = A \, \bm x_t, \quad \hbox{ s.t. } \bm y_t, \bm x_t \geq0 ~ \hbox{ for } t=1\dots T,
\end{equation}
since the observed aggregate traffic time series are low dimensional, $\bm y_t \in \mathbb{R}^m$, while the latent point-to-point traffic time series of interest are high-dimensional, $\bm x_t \in \mathbb{R}^n$. Thus the matrix $A_{m\times n}$ is rank deficient, $r(A)=m<n$, in this general problem setting.

The application to communication networks that motivates our research is {\em (volume) network tomography}; an application originally introduced by \citet{vardi:1996}, which has quickly become a classic since \citep[e.g., see][]{vand:iann:1994, tebaldiwest1998, Cao:Dav:Van:Yu:2000, Coates:Hero:Nowak:Yu:2002, Medina:2002, zhang:roug:lund:dono:2003, lian:yu:2003, Airo:Falo:2004, castro2004, LakhinaEtAl2004, lawr:2006a, fang:vard:zhan:2007, bloc:airo:2010}.
An established engineering practice is at the root of the inference problem in network tomography.
Briefly, the availability of point-to-point (or origin-destination) traffic volumes over time is critical for 
 reliability analysis (e.g., predicting flows and failures), 
 traffic engineering (e.g., minimizing congestion),
 capacity planning  (e.g., forecasting requirements),
 and security management (e.g., detecting anomalous traffic patterns). 
While technically possible, however, the direct measurement of point-to-point traffic imposes a heavy burden on network performance and is never implemented, except for special purposes over short time periods. 
Instead, indirect aggregate traffic volumes are routinely collected.
As a consequence, network engineers must solve the ill-posed linear inverse problems in Equation \ref{eq:linearproblem} to recover point-to-point traffic.
We give pointers to a vast literature that spans statistics, computer science, and operations research in Section \ref{sec:litreview}.

Figure \ref{fig:nt} provides an illustration of a communication network and the key mathematical quantities in the application to network tomography.
Dashed circles, {\em rs a--e}, represent routers and switches. 
Solid circles, {\em sn} 1--11, represent sub-networks.
 Intuitively, messages are sent from a subnetwork (origin) to another (destination) over the network.
 Routers and switches are special-purpose computers that quickly scan the messages and route them according to a pre-specified routing protocol.
\begin{figure}[t!]
  \centering
   \includegraphics[width=0.75\textwidth]{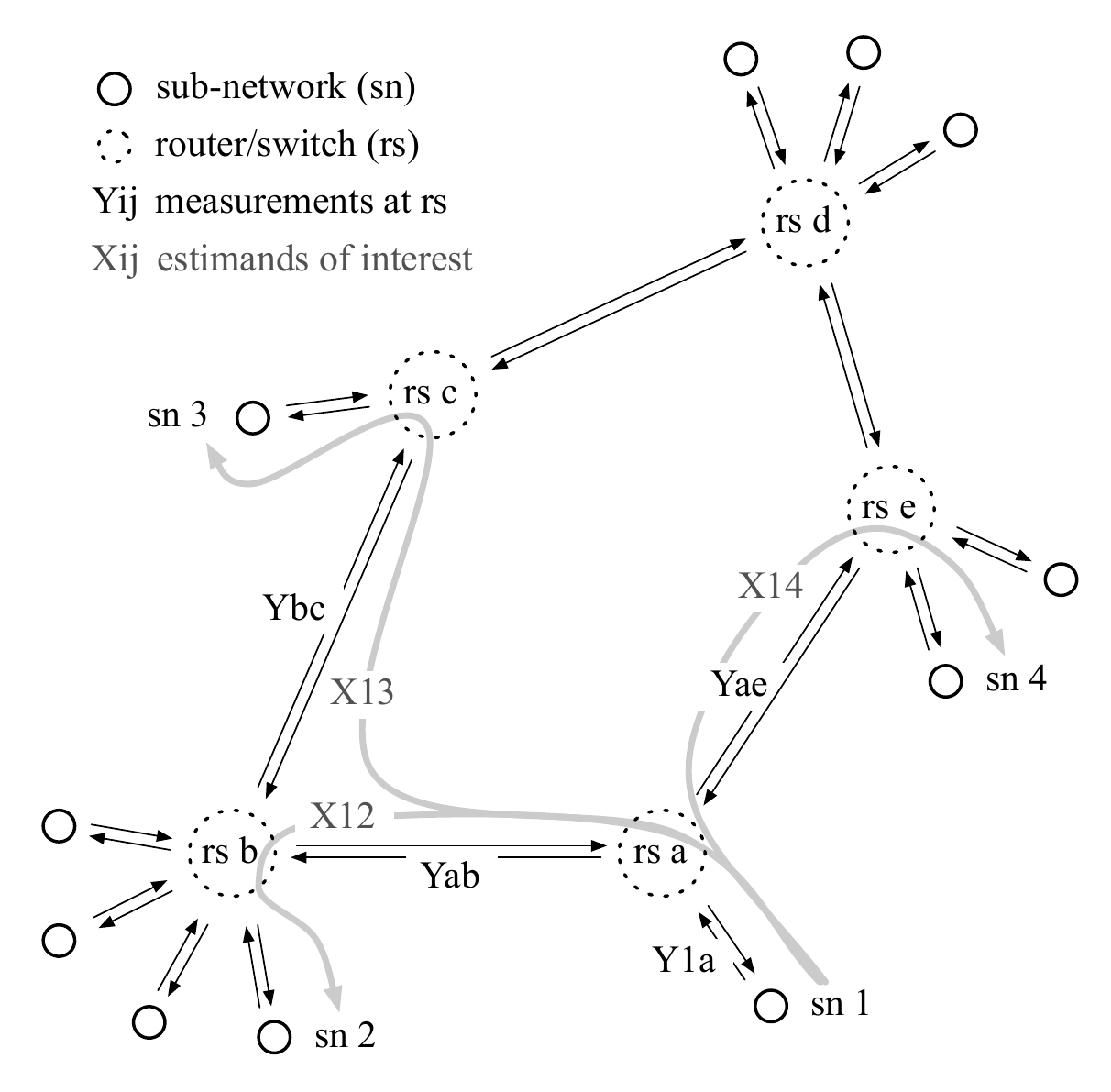}
 \caption{Illustration of the mathematical quantities in network tomography. Traffic $x_{13}$ from sn 1 to sn 3, contributes to counter $Y_{1a}$ into rs $a$, to counter $Y_{ab}$ into rs $b$, to counter $Y_{bc}$ out of rs $b$ (which is the same as counter $Y_{bc}$ into rs $c$), and to counter $Y_{c3}$ out of rs $c$. Traffic volumes on these counters are recorded every few minutes. This routing information is summarized in the column of the routing matrix $A$ that corresponds to the origin-destination traffic volume $x_{13}$.}
\label{fig:nt}
\end{figure}
Black arrows represent represent physical cables connecting routers and switches to subnetworks and indicate the direction in which traffic flows. 
 On each router, a set of (software) counters measure aggregate traffic volumes, $y_{ij}$,
 corresponding to incoming and outgoing cables, routinely (every five minutes).
 The traffic recorded by each of these counters is the sum of a known subset\footnote{This information is encoded by the routing protocol.} of non-observable point-to-point traffic, $x_{ij}$, represented by grey arrows, over the same time window.
 For example, in Figure \ref{fig:nt}, traffic volumes $x_{12},x_{13}, x_{14}$ all contribute to counter $y_{1a}$, and traffic volumes $x_{12},x_{13}$ both contribute to counter $y_{ab}$.
 To establish a formal connection between measurements $y_{ij}$ and estimands $x_{ij}$, it is convenient to simplify notation.
 Let's order all the (from-to) counter measurements collected over a five-minute window, into a vector $\bm y_t \in\mathbb{R}^m$. 
 We have $m=32$ measurements in Figure \ref{fig:nt}. 
 Let's also order\,\footnote{Any such two orderings can be arbitrary and defined independently of each other. Different pairs of orderings will lead to different $A$ matrices.}  all the non-observable point-to-point traffic volumes of interest over a five-minute window, into a single vector $\bm x_t$.
 We have $n=11^2$ point-to-point traffic volumes in Figure \ref{fig:nt}.
Using this more compact notation, we can write  $y_{it} = \sum_{j=1}^n A_{ij} \, x_{jt}$, where $t$ denotes time, and the matrix $A_{m\times n}$ is build using information about the pre-specified routing protocol. In particular, $A_{ij}=1$ if point-to-point traffic $i$ contributes to counter $j$ and $A_{ij}=0$ otherwise, in the case of deterministic routing. More complicated routing schemes, including probabilistic routing and dynamic load-balancing protocols that minimize the expected congestion, can also be formulated in terms of Equation \ref{eq:linearproblem}, as discussed in Section \ref{sec:remarks_extensions}.

From a statistical perspective, the application to communication networks we consider has additional features that make the inference task harder, and more interesting, than in a traditional deconvolution problem.
 First, we have low dimensional observations, $\bm x_t$, and high dimensional estimands, $\bm y_t$.   In Figure \ref{fig:nt}, for example, $m=32$ and $n=121$.
  In a general network with $d$ subnetworks, $m=O(d)$ is often orders of magnitude lower than $n=O(d^2)$, depending on the redundancy of some counters and on whether we are interested on traffic volumes on all possible origin-destination pairs.
 Second, the space where the estimands live is highly constrained. 
  We prove in Section \ref{sec:theory} that the solution space is a convex polytope of dimension $n-m$.
  The dimensionality of this convex polytope gives the true complexity of the problem, in a computational sense.
  Working in a constrained solution space helps the inference to a point (e.g., see Theorem \ref{thm:unimodality}). 
   We gain additional information from modeling traffic dynamics explicitly.
  Sampling from such an extremely constrained solution space, however, proves to be a challenge.
   We approach this sampling problem by combining a random direction sampler \citep{Smith:1984} with model-based regularization and a sequential sample-importance-resample-move (SIRM) particle filter \citep{Gilks:Berzuini:2001}. 

In this paper, we introduce a new dynamic multilevel model for aggregate traffic volumes that posits two latent dynamic processes, in Section \ref{sec:model}.
The first is a heavy-tailed traffic process, in which the amount of traffic on each origin-destination route is proportional to its variability up to a scaling factor shared by all origin-destination routes.
The second is an additional error process for better capturing near-zero traffic volumes.
We carry out inference via a sequential sample-importance-resample-move particle filter, and we develop  a novel two-stage strategy \citep[inspired by][]{CloggRubin1991}, in Section \ref{sec:estimation}.
A transformation of the heavy-tailed layer of the multilevel model can be embedded into a Gaussian state-space formulation with identifiable parameters.
We use the fit for such a reformulation to calibrate informative priors for key parameters of the multilevel model, and to develop an efficient particle filter that is statistically efficient, numerically stable and scales to large problems.
In Section \ref{sec:empirical}, we show that the proposed methods
 are more accurate than published state-of-the-art solutions on two time series data sets.
In Section \ref{sec:simulations}, we then design experiments that combine real and simulated data to investigate comparative performance.
In Section \ref{sec:remarks}, we offer remarks on modeling, inferential and computational challenges with the proposed methods, and discuss limitations and extensions.

The R package \texttt{\small networkTomography} includes the two unpublished data sets we analyze, as well as robust code implementing all the seven methods we compare. It is available on the Comprehensive R Archive Network at~ {\small \url{http://cran.r-project.org/}}.

\subsection{Related work}
\label{sec:litreview}

Applied research related to the type of problems we consider can be traced back to literature on transportation and operations research \citep{bell:1991, vand:iann:1994}. 
There the focus is on estimating a single set of origin-destination traffic volumes, $\bm y$, from integer-valued traffic counts over time, $\bm x_t$.
The line of research in statistics with application to communication networks is due to \citet{vardi:1996} who coined the term \textit{network tomography} by extending the approach to positron emission tomography by \cite{shep:vard:1982}. In this latter setting, statistical approaches may be able to leverage knowledge about a physical process, explicitly specified by a model, to assist the inference task. In the network tomography setting, in contrast, we can only rely on knowledge about the routing matrix and statistics about traffic time series.

From a technical perspective, \citet{vardi:1996} develops an estimating equation framework to estimate a single set of origin-destination traffic volumes from time series data; the same data setting and estimation task considered in the transportation and operations research literature.
\citet{tebaldiwest1998} develop a hierarchical Bayesian model that can be fit at each epoch independently, thus recovering time-varying origin-destination traffic volumes.
They point out that the hardness of the problem lies in having to sample  from a very constrained solution space.
Informative priors are advocated as a means to mitigate issues with non-identifiability and multi-modality that arise when making inference from aggregated traffic volumes at each  point in time.
In previous work \citep{Airo:Falo:2004} we extended their approach by explicitly modeling complex dynamics of the  non-observable time series.
\citet{Cao:Dav:Van:Yu:2000} develop a local likelihood approach to attack the non-identifiability issue.
They develop a Gaussian model with a clever parametrization that leads to identifiability of the  point-to-point traffic volumes, if they are assumed independent over a short time window---approximately 1-hour.
\citet{Cao:Dav:Van:Yu:Zu:2001} extend this approach to inference on large networks by adopting a divide-and-conquer strategy .
\citet{zhang:roug:lund:dono:2003} develop gravity models that can scale to large networks and use them to analyze point-to-point traffic on the AT\&T backbone in North America.
This approach has been extended by \citet{fang:vard:zhan:2007} and \citet{zhang2009}.
Work in this area by \citet{Soule2005} and \citet{Erramilli2006} provide slightly different approaches to this class of problems.
Recent reviews of this literature are available \citep{castro2004,lawr:2006a}.

One of the key technical problems that we face during inference is that of sampling solutions from a convex polytope.
 In this sense, the problem of sampling a feasible set of origin-destination traffic volumes given aggregate traffic is equivalent to that of sampling square tables given row and column totals, when the routing matrix corresponds to a star network topology.
As we consider more complicated topologies, the equivalence still holds for more elaborate specifications of marginal totals.
\citet{airo:haas:2011} characterize such a correspondence using projective geometry and the Hermite normal form decomposition of the routing matrix $A$.
Leveraging this equivalence, the iterative proportional fitting procedure \citep[IPFP, ]{demi:step:1940,fien:1970} provides a baseline for the traffic matrix estimation at each epoch in Section \ref{sec:eval_methods}.
Other approaches to the problem of sampling tables given row and column margins include   a sequential MCMC approach \citep{chendiacholmliu:2005},  a dynamic programming approach that is very efficient for matrices with a low maximum marginal total \citep{harr:2009,mill:harr:2011},  and  sampling strategies based on algebraic geometry \citep{diacstur:1998, dobr:2011} or on an explicit characterization of the solution polytope \citep{airo:haas:2011}.

A related body of work on tomography focuses on the problem of {\em delay (network) tomography}, in which the times traffic reaches/leaves the routers are recorded at the router level, instead of the volumes \citep{pres:duff:horw:tows:2002, lian:yu:2003b, lawr:2006b, deng:li:zhu:liu:2012}. However, inference in delay tomography has a different structure from inference in volume tomography, which  we focus on in this paper.

\section{A model of mixing time series on a network}
\label{sec:model}

Given $m$ observed traffic counters over time, $\bm y_t = \{y_{it} : i=1\dots m\}$, the aggregate traffic loads, we want to make inference on $n$ non-observable point-to-point traffic time series, $\bm x_t = \{x_{jt} : j=1\dots n\}$.
The routing scheme is parametrized by the routing matrix $A$, of size $m\times n$.
Without loss of generality, we consider the case of a fixed routing scheme. In this case, the matrix $A$ has binary entries;  element $A_{ij}$ specifies whether counter $i$ includes the traffic on the point-to-point route $j$.
Extensions to probabilistic routing and dynamic protocols for congestion management are discussed in Section \ref{sec:remarks_extensions}.

The main observation that informs model elicitation is that the measured traffic volumes, $\bm y_{t}$, are  heavy--tailed and sparse.
 For instance, peak traffic may be very high during certain hours of the day, and traffic is often zero during night hours. 
We develop a multilevel state-space model to explain such a variability profile of the observed aggregate traffic volumes.
 The proposed multilevel model involves two latent processes: $\{\bm \lambda_t : t\geq1\}$ at the top of the hierarchy, and $\{\bm x_t: t\geq1\}$ in the middle of the hierarchy.
The observation process is at the bottom of the hierarchy.
Intuitively, we posit a heavy tailed $\{\bm \lambda_t\}$ process, and a thin tailed $\{\bm x_t | \bm \lambda_t \}$ process, specifying $\bm x_t \mid \bm \lambda_t$ as additive error around $\bm \lambda_t$, constrained to be positive.
The key point is that we need both temporal correlation and independent errors to induce positive density for near-zero traffic.
In previous work, we assumed a heavy tailed $\{\bm \lambda_t\}$ process and a heavy tailed $\{\bm x_t\}$ process, specifying $\bm x_t\mid\lambda_t$ as independent log-normal variation conditionally on $\lambda_t$ \citep{Airo:Falo:2004}.
This set of choices, however, leads to some computational instability during inference when actual point-to-point traffic is zero (or nearly zero), as the likelihood for $\bm x_t$ had zero density at $x_{j,t}=0$.

In detail, we posit that each point-to-point traffic volume $x_{j,t}$ has its own time-varying intensity $\lambda_{j,t}$. 
This underlying intensity evolves through time according to a multiplicative process
\[
 \log \lambda_{j,t} = \rho \log \lambda_{j,t-1} + \varepsilon_{j,t}
\]
where $\varepsilon_{j,t} \sim \mathrm{N}(\theta_{1\,j,t}, \theta_{2\,j,t})$. 
Such a process leads to heavy-tailed traffic volumes that are not sparse. 
Moreover, small differences between low traffic volumes receive quite different probabilities under the log-normal model.
Thus, conditional on the underlying intensity, we posit that the latent point-to-point traffic volumes $x_{j,t}$ follow a truncated normal error model,
\[
 x_{j,t} | \lambda_{j,t}, \phi_t \sim
 \mathrm{TruncN}_{(0,\infty)} \left( \lambda_{j,t}, \; \lambda_{j,t}^\tau (\exp(\phi_t) - 1) \right),
\]
where $\tau$ and $\phi_t$ regulate temporally independent variation.
The mean-variance structure of the error model is analogous to that of a log-normal distribution for $\tau=2$; in particular, if $\log(z) \sim \mathrm{N}(\mu, \sigma^2)$, $\mathbb{E}(z) = \exp(\mu + \sigma^2/2)$ and $\Var(z) = \exp(2 \mu + \sigma^2) \cdot (\exp(\sigma^2)-1)$.
Thus, $\lambda_{j,t}$ is analogous to $\exp(\mu + \sigma^2/2)$, and $\phi_t$ is analogous to $\sigma^2$.
The observed aggregate traffic is obtained by mixing point-to-point traffic according to the routing matrix, $\bm{y}_t = A \bm{x}_t$.
The model specification is complete by placing diffuse independent log-Normal priors on $\lambda_{j,0}$. We also place priors on $\phi_{t}$ for stability, assuming $\phi_{t} \sim \mathrm{Gamma}(\alpha, \beta_t / \alpha)$.

This multilevel structure provides a realistic model for the aggregate traffic volumes we measure, which are both heavy-tailed and sparse.
The error model induces sparsity while maintaining analytical tractability of the inference algorithms, detailed in Section \ref{sec:estimation}, by decoupling sparsity control from the bursty dynamic behavior.
The log-Normal layer provides heavy-tailed dynamics and replicates the intense traffic bursts observed empirically, whereas the truncated Normal layer allows for near-zero traffic levels with non-negligible probability.
By combining these two levels, we induce a posterior distribution on point-to-point traffic volumes, the estimands of interest in this problem, which can account for both extreme volumes and sparsity.

In summary, we developed a model for observed $m$-dimensional time series $\{y_t\}$ mixing on a network according to a routing matrix $A$.
The model involves two $n$-dimensional latent processes $\{\lambda_t, x_t\}$, a set of latent variables $\{\phi_t\}$, and constants $\rho,\tau,\alpha,\{\beta_t\}$ and $\{\theta_{1t},\theta_{2t}\}$.
While the parameters $\tau,\rho,$ and $\alpha$ provide some flexibility to the model and can be calibrated through exploratory data analysis on the observed traffic time series, the parameters $\{\theta_{1t},\theta_{2t}, \beta_t\}$ are key to the inference.
Strategies for parameter estimation and posterior inference are discussed in Section \ref{sec:estimation}.

\subsection{Qualitative model checking}
\label{sec:qualitative}

As part of the model validation process, we looked at whether the simulated time series from the model in Section \ref{sec:model} possessed qualitative features of  real time series; namely, sparse traffic localized in time, and heavy tails in distribution of the traffic loads.

We generated a number of time series using parameter values $\tau=2$, $\rho=0.92$, $\theta_{1t}=0$, $\theta_{2t}=\frac{2\,\log(5)}{4}$  for all $t$.
In addition, we set $\phi_t=0.25$, rather than setting the constants $\alpha,\beta_t$ underlying the distribution of $\phi_t$, for simplicity.
 These are realistic values for the constants; they were calibrated on the actual point-to-point traffic volumes from the Bell Labs data set in Section \ref{sec:datasets}.
We used the empirical mean, standard deviation, and autocorrelation of $\{\log x_{it}\}$ for each of these time series combined with the observed level of sparsity to create the results below.

Figure \ref{fig:cdfs} shows the empirical CDF of the two latent processes $\{\bm \lambda_t\}$ and $\{\bm x_t\}$ for one simulated time series.
The $\{\bm \lambda_t\}$ process places more mass in any $\epsilon$ ball around zero relatively to the $\{\bm x_t\}$ process.
The figure confirms our intuition about how the truncated Gaussian error operates.

\begin{figure}[t!]
\centering
\includegraphics[width=0.80\textwidth]{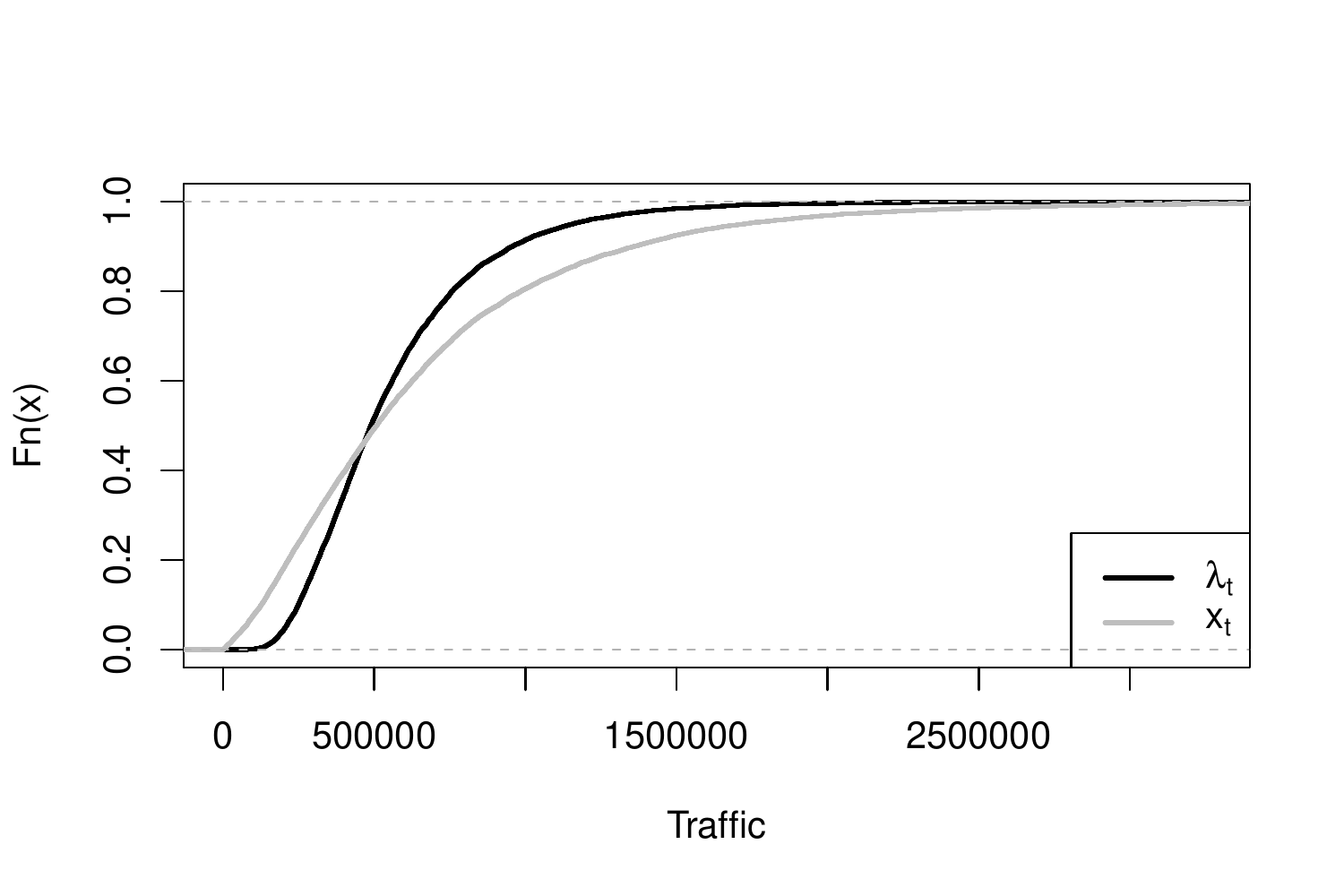}
\caption{Comparison of CDFs for $\lambda_{i,t}$ (solid black line) and $x_{i,t}$ (solid grey line).}
\label{fig:cdfs}
\end{figure}

Figure \ref{fig:simvsreal} shows an origin-destination traffic time series, in the left panel, and a simulated $\{x_t\}$ time series, in the right panel.
Real point-to-point traffic volumes from the router/switch to the local subnetwork were measured using special software installed on the routers, for validation purposes, courtesy of Bell Labs.
The Bell Labs data is further discussed in Section \ref{sec:datasets}.
The simulated time series displays two key qualitative characteristic of the real point-to-point traffic time series.
Specifically,  we observe sudden traffic surges, typical for a heavy tail distribution of traffic volumes, and localized periods of low traffic, as expected from our (truncated) additive Gaussian correction.
\begin{figure}[t!]
\centering
\includegraphics[width=0.80\textwidth]{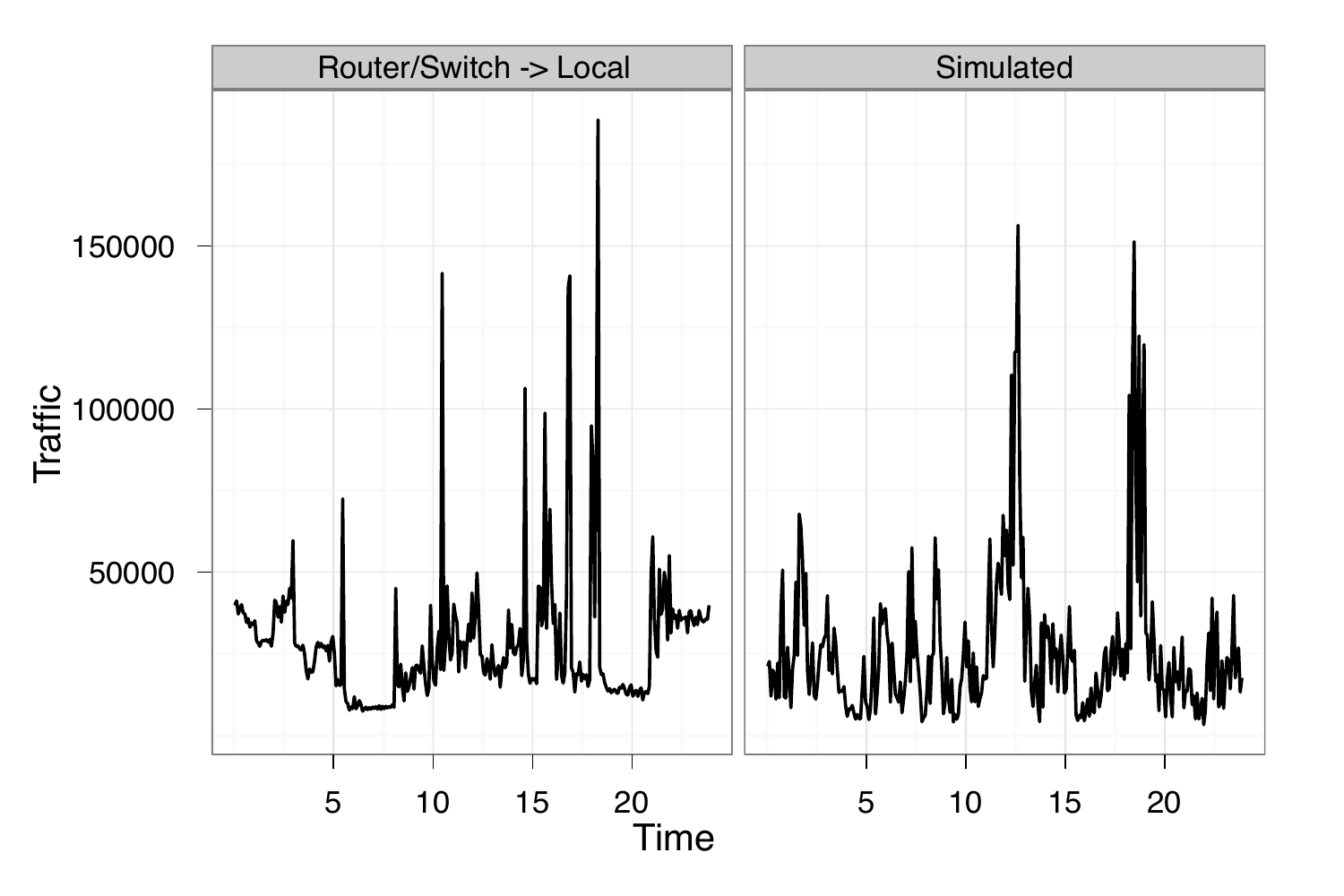}
 \caption{Actual (left panel) versus simulated (right panel)  point-to-point traffic volumes.}
\label{fig:simvsreal}
\end{figure}

The anecdotal findings above hold for most of the real point-to-point traffic volumes and simulated time series we considered.
This suggests that the proposed model is capable of generating data that qualitatively resemble real traffic volumes.
There are two important ways in which observed and simulated traffic differs, though.
First, simulated point-to-point traffic peaks last longer than real traffic peaks.
This is due to the autoregressive dependence in $\lambda_{j,t}$.
Second, simulated point-to-point traffic volumes are more variable than real traffic volumes in low traffic regimes.
This is due to the truncated Normal noise.
Thus, the proposed model is not a perfect generative mechanism for point-to-point traffic.
However, the structure the model captures is sufficient to provide useful posterior inference, as we demonstrate in Section \ref{sec:empirical}.

\subsection{Theory}
\label{sec:theory}

Multi-modality has been reported in the literature as an issue in network tomography. 
This issue has previously been illustrated only in a toy example; a small network with Poisson traffic \citep{vardi:1996}.
We investigated multi-modality from a geometric perspective.
Intuitively, our main result is that whenever a well-behaved distribution is used as a prior for the individual origin-destination traffic volumes, however diffused, the posterior cannot have disconnected modes.
The main result applies  directly to the case of real-valued traffic volumes \citep[e.g.,][and this paper]{Cao:Dav:Van:Yu:2000,Airo:Falo:2004}.
For the case of integer-valued traffic volumes, analyzed by others \citep[e.g.,][]{vardi:1996,tebaldiwest1998}, only a weaker condition is possible.

Consider the case of real-valued non-negative traffic volumes.
Feasible traffic volumes $\bm x_{t}$ must be non-negative and satisfy $y_{it} \geq \sum_{j} A_{ij}\, x_{jt} $.
In other words, the space where $\bm x_t$ lives can be characterized as the intersection between the positive orthant and $m$ half-spaces in $n-m$ dimensions.
This is a convex polyhedron.
Since both $\bm y_t$ and $A_{ij}$ are non-negative, the polyhedron is bounded and the space of feasible solution is a convex polytope.
The main result is a consequence of the fact that the space in which solution vectors $\bm x_t$ to Equation \ref{eq:linearproblem} is a convex polytope.

\begin{theorem}
\label{thm:unimodality}
Assume $f(\bm x_t)$ is quasi-concave. Let $\bm y_t = A \bm x_t$. Then, $f(\bm x_t | \bm y_t)$ will also be quasi-concave, and will have no separated modes. The set $\left\{ \bm z : f(\bm z | \bm y_t) = \max_{\bm w} f(\bm w | \bm y_t) \right\}$ is connected.
\end{theorem}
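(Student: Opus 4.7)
The plan is to exploit the convex-polytope geometry already set up in the preceding discussion. That discussion establishes that, for non-negative real-valued traffic, the feasible set $P = \{\bm z \geq \bm 0 : A\bm z = \bm y_t\}$ is a convex polytope of dimension $n-m$. The posterior is (proportional to) $f$ restricted to this convex set, and the whole theorem reduces to the elementary fact that restricting a quasi-concave function to a convex set preserves quasi-concavity, together with the fact that upper level sets of quasi-concave functions are convex.

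First, I would carefully set up the conditional density. Because the event $\{\bm y_t = A\bm x_t\}$ has prior measure zero, I would fix a parameterization of the affine subspace $\{\bm z : A\bm z = \bm y_t\}$ using a basis of the null space of $A$, and take the posterior to have density proportional to $f(\bm z)\,\mathbf{1}_{P}(\bm z)$ with respect to Lebesgue measure on that $(n-m)$-dimensional slice, the normalizing constant being strictly positive whenever $\bm y_t$ is actually attainable. With this setup, the posterior's upper level set at threshold $c$ is
\[
\{\bm z \in P : f(\bm z) \geq c\} \;=\; P \,\cap\, \{\bm z : f(\bm z) \geq c\},
\]
an intersection of two convex sets (the polytope $P$, and an upper level set of the quasi-concave $f$). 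Convexity of every such upper level set is precisely quasi-concavity of $f(\cdot \mid \bm y_t)$.

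The argmax set is then $\bigcap_{k \geq 1}\{\bm z : f(\bm z \mid \bm y_t) \geq M - 1/k\}$, where $M$ denotes the posterior supremum, an intersection of convex sets, hence itself convex and therefore connected. For the ``no separated modes'' claim I would argue as follows: if $\bm z^*$ were a local maximum with value $c^* \leq M$, then taking any global maximizer $\bm w$ and applying quasi-concavity along the segment from $\bm z^*$ to $\bm w$ forces $f \geq c^*$ on the whole segment; since $\bm z^*$ is a local max, the values on that segment must equal $c^*$ in a neighborhood of $\bm z^*$, so $\bm z^*$ lies on a plateau that connects continuously inside the upper level set $\{f \geq c^*\}$ toward the global mode, and therefore cannot be topologically isolated from it.

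The main technical obstacle is really confined to the first step: the conditioning event has prior probability zero, so the posterior requires a choice of base measure on the affine slice, and one needs to be explicit that the support of that conditional is the polytope $P$ rather than only its relative interior (so that the non-negativity constraints enter cleanly). Once we commit to Lebesgue measure on the slice and identify the posterior with $f \cdot \mathbf{1}_{P}$ up to a positive constant, the remainder is routine convex-geometry bookkeeping using the upper level set characterization of quasi-concavity.
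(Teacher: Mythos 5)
Your proposal is correct and follows essentially the same route as the paper's own proof: identify the posterior with $f$ restricted to the convex feasible polytope, observe that its superlevel sets are intersections of that polytope with the (convex) superlevel sets of $f$, and conclude quasi-concavity and convexity (hence connectedness) of the set of maximizers. Your treatment is in fact a bit more careful than the paper's --- you make the choice of base measure on the measure-zero slice explicit and give an explicit segment argument ruling out isolated local modes, whereas the paper's Case 1/Case 2 split on the location of the unconstrained mode is not actually needed --- but the underlying argument is the same.
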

\begin{proof}
$f(\bm x_t | \bm y_t) \propto I( y_t = A \bm x_t ) f(\bm x_t)$, so $f(\bm x_t | \bm y_t)$ has support on only a bounded $n-m$ dimensional subspace of $\mathbb{R}^n$, which forms a closed, bounded, convex polytope in the positive orthant. Denote this region $B(\bm y_t)$.
Denote the mode of $f(\bm x_t)$ as $\hat{\bm x}_t$. We now consider two cases.

\noindent\textit{Case 1.} $\hat{\bm x}_t \in B(\bm y_t)$. Then, the mode of $f(\bm x_t | \bm y_t)$ is also $\hat{\bm x}_t$.

\noindent\textit{Case 2.} $\hat{\bm x}_t \notin B(\bm y_t)$. Then, we must do a little more work.
Consider the level surfaces of $f(\bm x_t | \bm y_t)$, denoting $C(z) = \left\{ \bm u : f(\bm u | \bm y_t) = z \right\}$.
Define $z^* = \max_{B(\bm y_t)} f(\bm x_t | \bm y_t)$; this is well-defined and attained as $B(\bm y_t)$ is closed.
Now, denoting $C_0(z) = \left\{ \bm u : f(\bm u) = z \right\}$, we have $C(z) = C_0(z) \bigcap B(\bm y_t)$. As $f(\bm x_t)$ is quasiconcave, its superlevel sets $U_0(z) = \left\{ \bm u : f(\bm u) \geq z \right\}$ are convex.
Thus, the superlevel sets of $f(\bm x_t | \bm y_t)$, denoted $U(z) = U_0(z) \bigcap B(\bm y_t)$ analogously, are also convex.
So, we have that the set $U(z^*) = C(z^*)$ is convex and non-empty.
Therefore, we have established that set of modes for $f(\bm x_t | \bm y_t)$ is convex, hence connected.
\end{proof}

Next, consider the case of integer-valued non-negative traffic volumes.
To precisely state conditions under which pathological behavior is not possible in the integer-valued case we need to introduce some concepts from integral geometry.
A square integer matrix is defined to be \emph{unimodular} if it is invertible as an integer matrix (so its determinant is $ \pm 1 $).
By extension, we define a rectangular matrix to be \emph{unimodular} if it has full rank and each square submatrix of maximal size is either unimodular or singular (its determinant is $0$). 

With integer-valued traffic, the inferential goal is to sample solutions to Equation \ref{eq:linearproblem}, where $ A $ is a given unimodular $ m \times n $ matrix with $ \{0,1\} $ entries and $ \bm y_t $ is a given integer positive vector.
In the case of real-valued traffic, it was straightforward to show that the space of solutions to \ref{eq:linearproblem} is a convex polytope. %
In the case of integer-valued traffic we have:
\begin{theorem}[Airoldi \& Haas, 2011]
\label{lem:sol-int-pol}
The space of real solutions $x$ to equation $y=A\,x, ~x \geq 0$, is an integral polytope, whenever $A$ is unimodular.
\end{theorem}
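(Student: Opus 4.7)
The plan is to characterize the vertices of the polytope $P(y) = \{x \in \mathbb{R}^n : A x = y,\ x \geq 0\}$ and show that each vertex lies in $\mathbb{Z}^n$; since an integral polytope is defined as a polytope whose vertices have integer coordinates, this suffices. Before doing this, I would quickly note that $P(y)$ is bounded (hence genuinely a polytope, not merely a polyhedron): $A$ has $\{0,1\}$ entries and $y$ is a finite nonnegative integer vector, so for each $j$, picking any row $i$ with $A_{ij}=1$ gives $0 \leq x_j \leq y_i$.

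Next I would invoke the standard linear programming fact that the vertices of $P(y)$ are precisely the basic feasible solutions. That is, each vertex arises from choosing an index set $B \subset \{1,\ldots,n\}$ with $|B|=m$ such that the submatrix $A_B$ (the columns of $A$ indexed by $B$) is nonsingular, setting $x_{N}=0$ for $N = \{1,\ldots,n\} \setminus B$, and solving $A_B x_B = y$, with the feasibility requirement $x_B = A_B^{-1} y \geq 0$. The proof is the usual one: a vertex must saturate $n$ linearly independent constraints, of which the $m$ equality constraints $Ax=y$ are always active, forcing at least $n-m$ of the inequalities $x_j \geq 0$ to bind and leaving an $m$-column basis.

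Now I would invoke unimodularity in exactly the way the rectangular definition was crafted for. For any basis $B$ as above, $A_B$ is a nonsingular $m \times m$ submatrix of $A$; by hypothesis it must therefore be unimodular, so $\det A_B = \pm 1$. By Cramer's rule (equivalently, the adjugate formula $A_B^{-1} = (\det A_B)^{-1} \operatorname{adj}(A_B)$), the entries of $A_B^{-1}$ are signed minors of $A_B$ divided by $\pm 1$, hence integers. Therefore $x_B = A_B^{-1} y \in \mathbb{Z}^m$, and augmenting with $x_N = 0$ gives a vertex in $\mathbb{Z}^n$.

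I do not expect a real obstacle: the argument is essentially the Hoffman–Kruskal / Veinott–Dantzig style proof specialized to equality form. The one item that requires care is linguistic rather than mathematical, namely aligning the paper's rectangular notion of unimodularity (every maximal square submatrix is singular or $\pm 1$) with the classical basis-by-basis application of Cramer's rule; once that alignment is stated explicitly, the integrality of each basic feasible solution is immediate, and the conclusion follows because a polytope is the convex hull of its vertices.
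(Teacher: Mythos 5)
Your proof is correct and takes essentially the same route as the paper's: both identify each vertex of $\{x : Ax=y,\ x\geq 0\}$ with a basic feasible solution having at least $n-m$ zero coordinates, note that the corresponding maximal square submatrix must be nonsingular (else the vertex would not be uniquely determined), invoke unimodularity to get determinant $\pm 1$, and conclude integrality of $x_B = A_B^{-1}y$ from integrality of $y$. Your version merely makes explicit two points the paper leaves implicit (boundedness of the feasible region and the Cramer's-rule/adjugate step giving an integer inverse), so no further changes are needed.
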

\begin{proof}
The vertices are the intersections of the affine solution space of $ Ax = y $ with the $ (n-m) $-coordinate planes bordering the non-negative orthant.
So a vertex $ x $ has $ n-m $ zero coordinates.
Let's gather the rest of the coordinates into a positive integer vector $ x' $ of dimension $ m $.
And let's gather the corresponding columns of $ A $ into a square matrix $ A_1 $; so we get the equation $ A_1x'=y $.
If $ A_1 $ was singular, the latter system would have either none or infinitely many solutions, which would contradict that $ x $ is a vertex.
So $ A_1 $ is unimodular and $ x' = A_1^{-1}y $.
And since $ y $ is integer, $ x' $ is also integer, and so is $ x $.
\end{proof}
We can precisely characterize the space of feasible traffic volumes in the integer case, however, we cannot directly address multi-modality. The concept of multiple modes and local maxima are not well-defined in this setting. This results, however, provides insight into the connection between our results and the pathological case demonstrated by \citet{vardi:1996}.

The theory above helps us settle an important question about our model: how will posterior inference behave under dynamic updates? If dynamic updates were allowed to ``grow'' modes over time or exhibit other pathological behavior, the computation would be quite difficult and  inference results would be less credible. Fortunately, this is not the case. In general, we have established that the quasiconcavity of a predictive distribution $f(\bm x_t | \bm y_{t-1}, \ldots)$ implies the quasiconcavity of the posterior $f(\bm x_t | \bm y_t, \bm y_{t-1}, \ldots)$; thus, the set of maxima for $f(\bm x_t | \bm y_t, \bm y_{t-1}, \ldots)$ will form a convex set under the given condition. Since we initialize our model with a unimodal (quasiconcave) log-Normal distribution and impose log-Normal dynamics on the underlying intensities $\bm \lambda_t$, Theorem  \ref{thm:unimodality} provides a useful limit on  pathological behavior during inference with our model.

The situation is somewhat similar, but less constrained, in the case of integer traffic volumes, for unimodular routing matrices. While it is not known under what conditions a network routing scheme translates into a unimodular routing matrix $A$, the routing matrices in the cited literature are all unimodular. Thus, extreme forms of multi-modality can be ruled out from the literature on dynamic network tomography in many cases.
Our theory also suggests that models based upon real-valued traffic volumes will exhibit more predictable behavior under posterior updates than those based upon integer-valued volumes, making the former much more attractive for inference in cases where integer constraints provide little addditional information.

\section{Parameter estimation and posterior inference}
\label{sec:estimation}

Here we develop two inference strategies to produce estimates for the point-to-point traffic time series of interest, using the model described in the previous section.
The first strategy is based on a variant of the sequential sample-importance-resample-move (SIRM) particle filter \citep{Gilks:Berzuini:2001}. 
This filter is simple to state and implement, but is computationally expensive due to the large number of particles needed to explore probable trajectories in high-dimensional polytopes. 
Details are given in Section \ref{sec:sirm}.
The second strategy combines the sequential SIRM filter with a model-based regularization step that leads to efficient particles.
This strategy preferentially explores trajectories in regions of the solution polytopes with high posterior density.
The model-based regularization step involves fitting a Gaussian state-space model with an identifiable parametrization, which leads to informative priors for the multilevel  model that are leveraged by a modified SIRM filter.
Details are given in Section \ref{sec:2stage}.

\subsection{A SIRM filter for multilevel state-space inference}
\label{sec:sirm}

Inference in the multilevel state-space model is performed with a sequential sample-resample-move algorithm, akin to \citet{Gilks:Berzuini:2001}. Its structure is outlined in Algorithm \ref{alg:sirm}.

\begin{algorithm}[ht!]
\hspace{-8pt} \textbf{Sample-Importance-Resample-Move algorithm}\\ %
 \For{$t \leftarrow 1$ \KwTo $T$} {
 \hspace{-8pt} Sample step: \\
 \For{$j \leftarrow 1$ \KwTo $m$} {
  Draw a proposal $\log \lambda_{i,t}^{(j)*} \sim \mathrm{N}(\theta_{1\,i,t} + \log \lambda_{i, t-1}^{(j)}, \theta_{2\,i,t}) $ \\
  Draw $\phi_t^{(j)} \sim \mathrm{Gamma}(\alpha, \beta_t/\alpha)$ \\
  Draw $\bm x_t^{(j)*}$ from a truncated Normal distribution with mean $\bm \mu^{*} = \rho/m \sum_{j=1}^{m} \bm \lambda_{t-1}^{(j)}$ and covariance matrix $\Sigma^{*} = (\exp(\beta_t) - 1) \diag{ \bm \mu^{*2} }$ on the feasible region given by $\bm x_t^{(j)*} \ge 0$, $y_t = A x_t^{(j)*}$ using Algorithm \ref{alg:rda} \\
 }
 Resample our particles $(\bm \lambda_t^{(j)*}, \phi_t^{(j)*}, \bm x_t^{(j)*})$ with probabilities proportional to our weights $w_t^{(j)}$ \\
 Move each of our resampled particles $(\bm \lambda_t^{(j)}, \phi_t^{(j)}, \bm x_t^{(j)})$ using a MCMC algorithm (Metropolis-Hastings within Gibbs, with proposal on $\bm x_t$ given by Algorithm \ref{alg:rda})
 }
 \textbf{return} $(\lambda_t^{(j)}, \phi_t^{(j)}, \bm x_t^{(j)})$ for $j \leftarrow 1$ \KwTo $m$, $t \leftarrow 1$ \KwTo $T$ \\ %
 \medskip
 \caption{SIRM algorithm for inference with multilevel state-space model \label{alg:sirm}}
\end{algorithm}

In Algorithm \ref{alg:sirm}  we use  a random-walk prior on the latent intensities $\bm \lambda_t$. Thus, we fix $\theta_{1\,i,t}=0$ for all $i,t$, and calibrate the constants $\{\theta_{2\,i,t}\}$, %
$\{\beta_t\}$, %
$\alpha, \tau$ and $\rho$ as discussed in Section \ref{sec:constants}.

Sampling particles that correspond to feasible trajectories, that is, to point-to-point traffic volumes $\bm x_t$ in the convex polytope implied by $\bm y_t = A \, \bm x_t$, is non-trivial. 
The use of a random direction proposal on the region of feasible point-to-point traffic is a vital component of the SIRM filter. %

We use the {\it random directions algorithm} \citep[RDA, ][]{Smith:1984} to sample from the distributions of feasible traffic volumes on a constrained region, in the SIRM filter.
This method constructs a random-walk proposal on a convex region, such as the feasible regions for $\bm{x}_t$, by first drawing a vector $\bm d$ uniformly on the unit sphere.
It then calculates the intersections of a line along this vector with the surface of the bounding region, and samples uniformly along the feasible segment of this line.
Computing the feasible segment is facilitated by decomposing $A$. %
We decompose $A$ as $[A_1 \mid A_2]$ by permuting the columns of $A$, and the corresponding components of $\bm x_t$, so that $A_1$ $(r \times r)$ is of full rank.
Then, splitting the permuted vector $\bm x_t = [\bm x^{1}_{t} , \bm x^{2}_{t}]$, we obtain $\bm x^1_{t} = A_1^{-1} (\bm y_t - A_2 \bm x^2_{t})$.
This formulation can be used to construct an efficient random directions algorithm to propose feasible values of $\bm x_t$.
We have included pseudocode for this algorithm in Algorithm \ref{alg:rda}.
\begin{algorithm}[t!]
 \hspace{-8pt}\textbf{Random Directions Algorithm} \\ %
\smallskip
  \hspace{-4pt}\textbf{Initialization} \\
  \Begin{
   Decompose $A$ into $[A_1 \, A_2]$, $A_1$ $(r \times r)$ full-rank \\
   Store $B := A_1^{-1}$; $C := A_1^{-1} A_2$ \\
  }
  \hspace{-4pt}\textbf{Metropolis step} \\
  \textbf{given} $\bm x_t$ \\
  \Begin{
   Draw $\bm z \sim N(0, I)$, $\bm z \in \mathbb{R}^{c-r}$ \\
   Set $\bm d := \bm z / \| \bm z \|$ \\
   Calculate $\bm w := C \cdot d$ \\
   Set $h_1 := \max \{ \min_{k: w_k > 0} (\bm x_{1,t})_k / w_k, 0 \}$ \\
   Set $h_2 := \max \{ \min_{k: d_k < 0} -(\bm x_{2,t})_k / d_k, 0 \}$ \\
   Set $h := \min \{ h_1, h_2 \}$ \\
   Set $l_1 := \max \{ \max_{k: w_k < 0} (\bm x_{1,t})_k / w_k, 0 \}$ \\
   Set $l_2 := \max \{ \max_{k: d_k > 0} -(\bm x_{2,t})_k / d_k, 0 \}$ \\
   Set $l := \max \{l_1, l_2\}$ \\
   Draw $u \sim \Unif(l, h)$ \\
   Set $\bm x_{2,t}^* := \bm x_{2,t} + u \cdot \bm d$; $\bm x_{1,t}^* = \bm x_{1,t} - u \cdot w$; $\bm x_t^* = (\bm x_{1,t}^*, \bm x_{2,t}^*)$ \\
   Set $\bm x_t := \bm x_t^*$ with probability $\min \{ f(\bm x_t^*)/f(\bm x_t), 1 \}$
  }
  \textbf{return} $\bm x_t$ \\ %
\medskip
\caption{RDA algorithm for sampling from $f(\bm x_t)$, truncated to the feasible region given by $A \cdot \bm x_t = \bm y_t$ \label{alg:rda}}
\end{algorithm}

All draws from this proposal have positive posterior density, since they are feasible.
This property allows our sampler to move away from problematic boundary regions of the extremely constrained solution polytope.
In contrast, methods that use Gaussian random-walk proposal rules, for instance, can perform quite poorly in these situations, requiring an extremely large number of draws to obtain feasible proposals.
For example, with $\bm x_t \in \mathbb{R}^{16}$, it can sometimes require on the order of $10^9$ draws to obtain a feasible particle, when using the conditional posterior from $t-1$ as proposal.
This is a situation we encountered with alternative estimation methods described in Section \ref{sec:empirical}.

\subsubsection{Setting the constants}
\label{sec:constants}

To carry out inference, we must set values for the constants underlying the distributions at the top layer of the multilevel model; $\{ \theta_{1\,j,t} \}$, $\{ \theta_{2\,j,t} \}$, $\{ \beta_t \}$, $\alpha, \rho$, and $\tau$. 
Choices can be evaluated using small sets of point-to-point traffic  collected for diagnostic purposes, as in Section \ref{sec:qualitative}

The (fixed) autocorrelation parameter $\rho$ drives the dynamics of $\log \lambda_{j,t}$. We typically set $\rho = 0.9$. 
A high value for $\rho$ is a practically plausible assumption, as point-to-point traffic volumes tend to be highly autocorrelated in communication networks \citep{cao:clev:lin:sun:2002}. 

The parameter $\tau$ controls the skew of point-to-point traffic volumes. 
The distribution of point-to-point traffic has been found to be extremely skewed empirically \citep{Airoldi:2003}, and this skew is comparable to the skew of the aggregate traffic volumes.
\citet{Cao:Dav:Van:Yu:2000} found that the local variability of the aggregate traffic volumes is well-described by $\tau=2$. 
In our analyses, we fix $\tau=2$. This assumption was checked on pilot data as in \citet{Cao:Dav:Van:Yu:2000}.

The inference strategy based on the SIRM filter is amenable to a wide range of techniques for regularization.
The simplest of these is a random walk prior on $\log \bm \lambda_t$. 
For this, we fix $\theta_{1\,i,t}=0$ for all $t$ and set $\{\theta_{2\,i,t}\}$ by looking at the observed variability of $\{ \bm y_t \}$.
On the data sets we consider, $\theta_{2\,i,t}=\frac{2~\log 5}{4}$ appeared reasonable based on the variability of the observed aggregate traffic.
That is, we set $\theta_2$ by rescaling the average variance of $\log y_{j,t} - \rho \log y_{j,t-1}$ to correct for aggregation.
This is a somewhat crude approach, but it provides a reasonable starting point.

The collection of constants $\{\beta_t\}$ controls the common scale of variation in the point-to-point traffic.
These constants were  set by examining the observed marginal distribution of $\{ \bm y_t \}$.
We selected $\beta_t = 1.5$ as reasonable value based on the observed excess abundance of values near zero.
Last, the constant $\alpha$ is a fixed tuning parameter. We set it to $n/2$ to provide a moderate degree of regularization for our inference, providing a weight equivalent to $1/2$ of the observed data.

The random walk prior is a simple starting point for our inference and provides cues of computational issues.
Its use is not recommend in practice.
The inverse problem we confront in network tomography is too ill-posed for such a simplistic approach to regularization.
A more refined, adaptive strategy is necessary to provide useful answers in realistic settings.

\subsection{Two-stage inference}
\label{sec:2stage}

Here, we develop an inference strategy that improves the SIRM filter in Algorithm \ref{alg:sirm} by adding a regularization step that guides our inference, focusing our particle filter and sharing information across multiple classes of models.
The idea is to leverage a first-stage estimation step to calibrate informative priors for key parameters in the multilevel model, in the spirit of empirical Bayes \citep{CloggRubin1991}.
Different forms of model-based regularization are feasible (and useful) depending upon traffic dynamics and the topology of a given network.
One approach is to use simple, well-established methods such as gravity-based methods \citep{zhang:roug:duff:gree:2003,zhang:roug:lund:dono:2003,fang:vard:zhan:2007}.
Another approach, developed below, uses a specific parametrization of a Gaussian state-space model to approximate Poisson traffic.
We find that these two approaches are useful in different situations (namely, local area networks and internet backbone networks) as we discuss in Section \ref{sec:empirical}.

\subsubsection{Model-based regularization}
\label{sec:regularization}

Here we describe a simple model used to calibrate key regularization parameters $\{\theta_{1t},\theta_{2t}, \beta_t\}$ of the multi-level state-space model. We posit that $\bm x_t$ follows a Gaussian autoregressive process,
\begin{equation}
 \label{eq:lds+}
  \left \{ \begin{array}{rcl}
    \bm{x}_t & = & F \cdot \bm{x}_{t-1} + Q \cdot \bm{1} + \bm{e}_t \\
    \bm{y}_t & = & A \cdot \bm{x}_t +  \bm{\epsilon}_t.
  \end{array} \right.
\end{equation}
This model can be subsumed into a standard Gaussian state-space formulation, as detailed in Eq.~\ref{eq:gssm}.
\begin{eqnarray}
  && 
  =\left \{ \begin{array}{rcl}
    \left[ \bv{c} \bm{x}_t\\ \bm{1}\ev \right] & = & \left[ \bv{cc} F& Q\\ 0& I\ev \right]  \left[ \bv{c} \bm{x}_{t-1}\\  \bm{1} \ev \right] +  \left[ \bv{c} \bm{e}_t\\ \bm{0}\ev \right] \\
     \bm{y}_t & = & \left[ A \right|  \bm{0} \left] \right.  \left[ \bv{c} \bm{x}_t\\ \bm{1}\ev \right] + \bm{\epsilon}_t
  \end{array} \right. \nonumber  \vspace{3pt} \\ 
  &&
  \label{eq:gssm}
 =\left \{ \begin{array}{rcl}
    \tilde{\bm{x}}_t  & = & \tilde{F} \cdot \tilde{\bm{x}}_{t-1}  + \tilde{\bm{e}}_t  \\
    \bm{y}_t  & = & \tilde{A}  \cdot \tilde{\bm{x}}_t  + \bm{\epsilon}_t.
  \end{array} \right. 
\end{eqnarray}

We estimate $\bm Q$ and $\Cov \bm e_t$, fixing the remaining parameters.
$F$ is fixed at $\rho \bm I$ for simplicity of estimation, with $0.1$ a typical value for $\rho$.
We also fix $\Cov \bm{\epsilon}_t$ at $\sigma^2 \bm I$, with $0.01$ a typical value for $\sigma^2$.
We assume $\bm Q$ to be a positive, diagonal matrix, $Q = \diag{\bm \lambda_t}$, and specify $\Cov \bm e_t$ as $\bm \Sigma_t = \phi \, \diag{\bm \lambda_t}^\tau$, where the power is taken entry-wise.
We obtain inferences from this model via maximum likelihood on overlapping windows of a fixed length.
We develop an inference strategy for this model in Section \ref{sec:empirical}, and provide computational details in Appendix \ref{supp:ssm}.

The model in Eq. \ref{eq:lds+} contains the local likelihood model of \citet{Cao:Dav:Van:Yu:2000} as a special case, when $\rho=0$. 
The marginal likelihood for this model depends only upon the means and covariances of the data. 
A desirable property of this model is that its parameters are identifiable, under conditions analogous to those given in \citet{Cao:Dav:Van:Yu:2000}, for a fixed value of $\rho$.

\subsubsection{Identifiability}

Identifiability in network tomography is a delicate and complex issue \citep{Singhal2007}.
For the proposed model, however, it suffices to consider the marginal distribution of the $\bm y_t$'s.
Under the conditions on the routing matrix $A$ analogous to those in \citet{Cao:Dav:Van:Yu:2000}, the marginal mean and covariance of $\bm y_t$ is an invertible function of the parameters $\bm \lambda$ and $\phi$.
This argument is straightforward with the steady-state initialization discussed in Appendix \ref{supp:ssm}, but it extends to more general settings.
In the case of steady-state initialization, the following result holds.

\begin{theorem}
Assume $\bm y_1, \ldots, \bm y_T$ are distributed according the model given by Eq. \ref{eq:gssm} and described above.
Further assume that $\left| \rho \right| < 1$ and the model is initialized from its steady state---that is, $\bm x_0 \sim N\left( \frac{1}{1-\rho} \bm \lambda, \frac{\phi}{1-\rho^2} D \right)$, where $D = \diag{\bm \lambda_t}^\tau$.
Then, $(\bm \lambda, \phi, \rho)$ is identifiable under the same conditions required for the identifiability of the locally IID model of \cite{Cao:Dav:Van:Yu:2000}.
\end{theorem}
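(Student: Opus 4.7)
The plan is to exploit the fact that under steady-state initialization with $|\rho|<1$, the process $\{\bm y_t\}$ is a stationary Gaussian time series whose distribution is fully determined by its mean vector and autocovariance function. I will compute these moments explicitly as functions of $(\bm\lambda, \phi, \rho)$, then show the map from parameters to moments is injective under the same structural condition on $A$ that \citet{Cao:Dav:Van:Yu:2000} impose.

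First, since $\bm x_t$ is marginally $N\!\left(\frac{1}{1-\rho}\bm\lambda,\; \frac{\phi}{1-\rho^2}D\right)$ at steady state with $D = \diag{\bm\lambda}^\tau$, the marginal distribution of $\bm y_t$ has mean $\bm\mu_y = \frac{1}{1-\rho} A \bm\lambda$ and covariance $\Gamma_0 = \frac{\phi}{1-\rho^2} A D A^T + \sigma^2 I$. From the AR(1) recursion for $\bm x_t$ we also get $\Cov(\bm x_{t+1}, \bm x_t) = \rho\, \Var(\bm x_t)$, so the lag-1 autocovariance of the observations is $\Gamma_1 = \rho \cdot \frac{\phi}{1-\rho^2} A D A^T$; higher-lag autocovariances are then powers of $\rho$ times the same matrix, so they add no further information.

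Identification proceeds in stages. Since $\sigma^2$ is fixed, the matrix $M := \Gamma_0 - \sigma^2 I = \frac{\phi}{1-\rho^2} A D A^T$ is recoverable from the joint law. The relation $\Gamma_1 = \rho M$ then immediately pins down $\rho$, provided $M \neq 0$ (which holds for any $A$ of positive row sum and any strictly positive $\bm\lambda$). Once $\rho$ is known, the mean equation yields $A\bm\lambda = (1-\rho)\bm\mu_y$ and the rescaled covariance gives $\phi\, A\, \diag{\bm\lambda}^\tau A^T = (1-\rho^2)\,M$. These are precisely the mean and covariance equations of the locally IID model in \citet{Cao:Dav:Van:Yu:2000} (with the known multiplicative constants $(1-\rho)$ and $(1-\rho^2)$), so under their condition on $A$, the pair $(\bm\lambda, \phi)$ is uniquely determined.

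The main obstacle, and what I expect to require the most care, is the first stage. Although $\Gamma_1 = \rho M$ is tidy, one must verify that the equation is solvable consistently across all entries (i.e., that the ratio $\Gamma_1/M$ is genuinely a single scalar rather than entry-dependent), which is implied by the model but needs to be stated as a testable implication in the identifiability argument. One also needs $M$ to be non-zero in a strong enough sense to avoid degenerate cancellations -- a mild regularity condition that follows from positivity of $\bm\lambda$ and the nondegeneracy hypothesis on $A$ inherited from \citet{Cao:Dav:Van:Yu:2000}. Beyond this, the extension to the non-steady-state case (mentioned in the theorem's remark) would follow by noting that the transient mean and covariance converge geometrically to the steady-state values, so identifiability from sufficiently long trajectories reduces to the steady-state argument.
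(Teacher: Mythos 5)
Your proposal is correct and follows essentially the same route as the paper: steady-state joint Gaussianity of $(\bm y_1,\ldots,\bm y_T)$, the marginal moments $\frac{1}{1-\rho}A\bm\lambda$ and $\frac{\phi}{1-\rho^2}ADA^\top + \sigma^2 I$, Theorem 1 of \citet{Cao:Dav:Van:Yu:2000} (full column rank of the matrix $B$ of rows and pairwise products of rows of $A$) to recover $(\bm\lambda,\phi)$, and the lag-$k$ autocovariance $\frac{\phi\rho^{|k|}}{1-\rho^2}ADA^\top$ to recover $\rho$. The only difference is ordering---you pin down $\rho$ first from the ratio of the lag-1 to the noise-corrected lag-0 covariance, whereas the paper inverts the marginal moments for $(\bm\lambda,\phi)$ with $\rho$ fixed and then identifies $\rho$ from the componentwise autocorrelations---which is immaterial.
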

\begin{proof}
The observations $(\bm y_1, \ldots, \bm y_T)$ are jointly normally distributed under the given model.
Further, assuming $\left| \rho \right| < 1$ and steady-state initialization, $\bm y_t \sim N\left( \frac{1}{1-\rho} A \bm \lambda, \frac{\phi}{1-\rho^2} A D A^\top + \sigma^2 I \right)$ marginally for $t = 1, \ldots, T$.
Define $B$ as the matrix containing the rows of $A$ and all distinct pair-wise component-wise products of $A$'s rows.
Fixing $\rho$, these marginal moments are invertible functions of $(\bm \lambda, \phi)$ if and only if the matrix $B$ has full column rank by Theorem 1 of \cite{Cao:Dav:Van:Yu:2000}.
Further, as $\Cov(\bm y_t, \bm y_{t+k}) = \frac{\phi \rho^{|k|}}{1-\rho^2} A D A^\top$, $\rho$ is also identifiable from the component-wise autocorrelations of $\bm y_t$.
\end{proof}

A sufficient condition for $B$ to have full column rank is for A to include aggregate incoming and outgoing (source and destination) traffic for each node, as discussed in \cite{Cao:Dav:Van:Yu:2000}.
This condition holds for all examples we consider and can be checked in practice with pilot studies; such aggregate traffic volumes are easily obtained via network management protocols such as SNMP and are standard input for the widely-used gravity method.
Less restrictive conditions are possible based on the results of \citet{Singhal2007}; however, they are not needed for the situations we consider.

Next, we describe how this model is used to calibrate priors for $\{\lambda_t\}$ and $\{\phi_t\}$ in the multi-level state-space model.

\subsubsection{Calibrating key regularization parameters}
\label{sec:calibration}

To calibrate priors for $\{\lambda_t\}$ and $\{\phi_t\}$ in the multilevel state-space model, we follow a few steps. 
 First obtain estimates from the Gaussian state-space model in the previous Section.
 We correct the estimates at each epoch through the iterative proportional fitting procedure (IPFP) to ensure positivity and validity with respect to our linear constraints. 
 We then smooth the corrected estimates using a running median with a small window size (consisting of 5 observations) to obtain a final set of $\bm{\hat x}_t$ estimates. 
This smoothing step is important as it removes outlying estimates, which often originate from computational errors, from the prior calibration procedure.
These outliers can otherwise degrade the effectiveness of the regularization. We have observed some sensitivity to the choice of window sizes---too broad and it smooths out bursts of traffic, too narrow and outlying estimates compromise our regularization.
We selected 5 as the narrowest window that empirically removed outliers; we recommend this as a guideline for other settings.
These final $\bm{\hat x}_t$ estimates are used to set the mean traffic intensity for $\lambda_t$ as follows,
\[
 \theta_{1\,j,t} = \log \hat x_{j,t} - \rho \log \hat x_{j,t-1}.
\]
The variability of the traffic intensity $\theta_{2\,j,t}$ is set using the estimated variance of the final estimates $\hat x_{j,t}$. Denoting the estimated final variances with $\hat V_{j,t}$, we set $\theta_{2\,j,t}$ as follows,
\[
 \theta_{2\,j,t} = (1-\rho^2) \, \log(1 + \hat V_{j,t} / \hat x_{j,t}^2).
\]

The estimated $\{ \hat{\phi}_t \}$ in the Gaussian state-space model are used to calibrate the prior for the corresponding parameter $\{\phi_t\}$ in the multilevel state-space model. In particular, we set $\beta_t = \log(1 + \hat{\phi}_t)$. The form of this calibration is based on the log-Normal variance relationship described in Section \ref{sec:model}. Remaining constants are calibrated as described in Section \ref{sec:constants}.

Alternative calibration approaches are possible, which use estimates from simple models to calibrate regularization parameters.
For instance, in Section \ref{sec:empirical} we consider a simple gravity model in addition to the state-space model described above.
We take each gravity estimate to be $\hat x_{j,t}$, and we set each $\theta_{1\,j,t}$ as above.
With simpler model we recommend using an empirical approach to setting $\theta_2$; using the gravity model estimates, we set each $\theta_{2\,j,t}$ equal to the overall variance of $\theta_{1\,j,\cdot}$.

\section{Empirical analysis of traffic data}
\label{sec:empirical}

Here, we present the analysis of three aggregate traffic data sets, for which origin-destination (OD) traffic volumes were also collected with special software over a short time period. 
The first data set involves traffic volumes on a local area network with 4 nodes (16 OD pairs)  at Bell Labs, previously analyzed in \citet{Cao:Dav:Van:Yu:2000}.
The second data set involves traffic volumes on a local area network with 12 nodes (144 OD pairs)  at Carnegie Mellon University, previously analyzed by \citet{Airo:Falo:2004}.
The final data set consists of traffic volumes from the Abilene network, an Internet2 backbone network with 12 nodes (144 OD pairs) previously analyzed in \cite{fang:vard:zhan:2007}.
We use these three data sets to evaluate the proposed deconvolution methods. 
We compare the performance of our approach to that of several previously presented in the literature for this problem, focusing on accuracy, computational stability, and scalability.

We find that, of the seven methods we compare, the proposed methods consistently outperform all others both in terms of  $L_1$ and $L_2$ estimation error. The empirical evaluation we provide below uses communication networks that are among the largest ever tried on this problem in the statistics and computer science literature. A quantitative evaluation is possible, since ground-truth origin-destination traffic was laboriously collected for the three network we consider. 

An R package that includes these three data sets and code to replicate the analyses below is available on CRAN, in the \texttt{\small networkTomography} package.

\subsection{Data sets}
\label{sec:datasets}

The first data set was provided courtesy of Jin Cao of Bell Labs. 
We analyze the  traffic volumes measured at {\em router1}, with four subnetworks organized as in Figure \ref{fig:networks} (left panel).
These yield eight observed aggregate traffic volumes (seven of them are independent, since the router does not send, nor receives traffic) and 16 origin-destination traffic volumes \citep{Cao:Dav:Van:Yu:2000}. 
The aggregate traffic volumes are measured every five minutes over one day on the Bell Labs corporate network. 
This yields multivariate measurements at 287 points in time.
The small size of this network allows us to focus on the fundamentals of the problem, avoiding scalability issues.

The second data set was collected at the Information Networking Institute of Carnegie Mellon University, courtesy of Russel Yount and Frank Kietzke.
For the purpose of this paper, given that the topology of Carnegie Mellon network is sensitive, we built a data set of aggregate traffic volumes by mixing two days of origin-destination traffic volumes on a slightly modified network topology.
The network topology we use consists of 12 subnetworks, organized as in Figure \ref{fig:networks} (right panel).
These are connected by two routers, one with four of the nodes, the other with the remaining eight nodes.
The routers are linked via a single connection.
This configuration yields 26 observed aggregate traffic volumes and 144 origin-destination traffic volumes, observed every five minutes at 473 points in time.
\begin{figure}[b!]
 \centering
\includegraphics[width=0.75\textwidth]{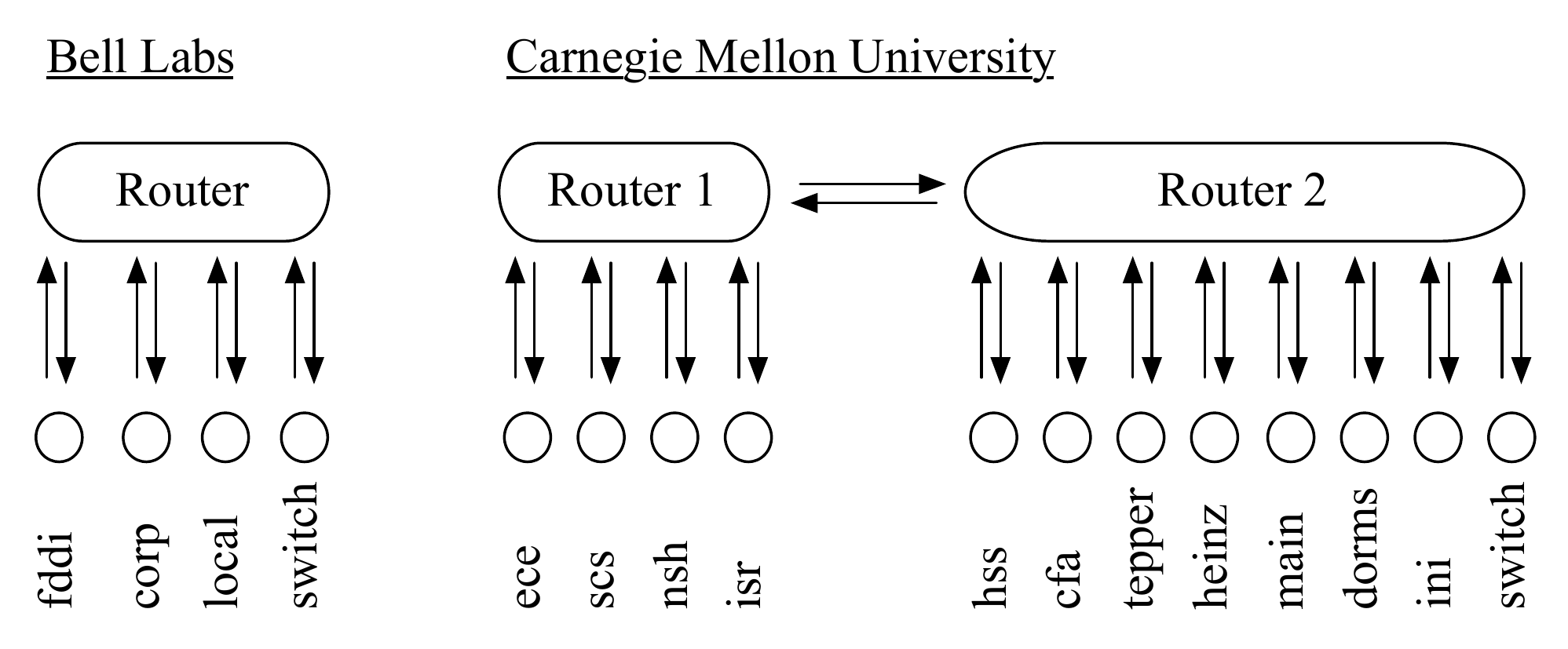}
\caption{Topologies of the Bell Labs and Carnegie Mellon networks. The Abilene network has a more complex topology, an illustration of which is available in \cite{fang:vard:zhan:2007}. We also include these data sets in the \texttt{networkTomography} package.}
\label{fig:networks}
\end{figure}
This larger data set allows us to compare network tomography techniques in a richer, more realistic setting.
In combination with the {\em router1} data, it also allows us to explore the effect of dimensionality on performance and computational efficiency on real traffic data.
Neither data set contained any missing observations.

Our third data set comes from the Abilene network, courtesy of Matthew Roughan.
We use the {\it X1} dataset analyzed in \cite{fang:vard:zhan:2007}, which consists of  aggregate and point-to-point traffic volumes measured every five minutes over a 3-day period. %
The underlying network has 12 nodes, yielding 144 point-to-point traffic time series.
A total of 54 aggregate traffic volumes are observed at each time, consisting of 30 inner links and 24 edge links.
Abilene is an Internet2 backbone network. Abilene's traffic volumes, dynamics and variability are quite different from those observed on local area networks such as Carnegie Mellon's and Bell Labs'. Abilene's topology is more complicated than simple star and dual-loop configurations.
Thus this data set provides a different scenario  for testing  tomography methods.

We did not apply any seasonal adjustment or other more complex dynamic models to these data sets, given the short time period they span.
We would recommend such an extension for time series spanning longer periods; indeed, even for data spanning only two days, usage patterns by time-of-day can be present.
However, we endeavor to compare our deconvolution algorithms on equal footing---our focus is dynamic deconvolution.
Thus, all methods are implemented with only local dynamics, without seasonal adjustment.

\subsection{Competing methods}
\label{sec:competing}

We tested locally IID and smoothed Gaussian methods \citep{Cao:Dav:Van:Yu:2000}, a Bayesian MCMC approach  \citep{tebaldiwest1998}, a simple gravity method \citep[e.g., see][]{fang:vard:zhan:2007}, a tomogravity method \citep{zhang:roug:lund:dono:2003}, the Gaussian state-space model developed for regularization in Section \ref{sec:regularization}, the  multilevel state-space model with na\"ive regularization developed in Section \ref{sec:sirm}, and the proposed two-stage estimation procedure with model-based regularization. %
All approaches were implemented in R with extensions in C to avoid computational bottlenecks (e.g., IPFP).
For the methods of \citet{Cao:Dav:Van:Yu:2000} and the Gaussian state-space model, which use windowed estimates, we selected a window width of 23 observations 
on the basis of prior work \citep{Airoldi:2003}. The final point-to-point traffic volume estimates were generally insensitive to a range of window sizes---this is largely attributable to the use of estimated ${\bm x}_t$'s instead of ${\bm \lambda}_t$'s for regularization.
For the Abilene data, we considered the alternative model-based regularization procedure for the dynamic multilevel model, based on a simple gravity model as detailed at the end of Section \ref{sec:calibration}.

For the approach of \citet{tebaldiwest1998}, we tested both the original implementation and our own modification in which (following the authors' original notation) $\lambda_j$ and $X_j$ are sampled with a joint Metropolis-Hastings step.
The proposal distribution for this step is constructed by first proposing uniformly along the range of feasible values for $X_j$ given all other values, then drawing $\lambda_j$ from its conditional posterior given the proposed $X_j$.
This greatly improves the efficiency of the MCMC sampler, leading to improved convergence (we observed multivariate Gelman-Rubin diagnostics reduced by approximately an order of magnitude) and better predictions.
These improvements allow us to compare the approach of \citet{tebaldiwest1998} on a more level playing field, focusing on the underlying model while mitigating computational issues.

For inference in the  proposed dynamic model, we used 1000 particles and 10 MCMC iterations (in the move step of the SIRM filter) per time point in all experiments.
We selected the former based on the number of effectively independent particles per time point, targeting a minimum of 10 in pilot runs.
The number of MCMC iterations was chosen as a balance between computational burden and particle diversity.
For the tomogravity method of \citet{zhang:roug:lund:dono:2003}, we set $\lambda$ to $0.01$.
Results were insensitive to the choice of $\lambda$ across a wide range of values, as previously reported \citep{zhang:roug:lund:dono:2003}.
These choices were kept consistent across experiments because they offered acceptable trade-offs and enabled a meaningful comparison of the competing methods.

\subsection{Performance comparison}
\label{sec:performance}

We summarize performance of the methods described above on all three data sets in Tables \ref{table:bell}, \ref{table:cmu}, and \ref{table:abilene}.
Each row corresponds to a method, and the columns provide mean $L_1$ and $L_2$ errors over time for the estimates of OD traffic in each data set with corresponding standard errors.
For the Bell Labs data set, we provide errors in kilobytes; for the CMU and Abilene data, we provide errors in megabytes.
We also provide Figure \ref{fig:bellLabsFitted} below and Figures S1--S5 in the supplemental material as a visualization of our results on the Bell Labs data set.
We compare and discuss performance in terms of accuracy, computational stability, and scalability.

\begin{table}[t!]
\begin{centering}
\begin{tabular}{l|cccc}
 & \multicolumn{4}{c}{\textbf{BELL LABS}} \\
\hline 
\textbf{Method}  & \textbf{$\bm{L}_{2}$ Error}  & \textbf{SE}  & \textbf{$\bm{L}_{1}$ Error}  & \textbf{SE}  \\
\hline 
Gravity & 62.96 & 3.16 & 182.58 & 7.69 \\
Tomogravity \citep{zhang:roug:lund:dono:2003} & 62.96 & 3.16 & 182.58 & 7.69 \\
Locally IID model  & 104.59 & 5.54 & 160.24 & 6.53 \\
Smoothed locally IID  & 104.25 & 5.52 & 157.87 & 6.48 \\
Tebaldi \& West (uniform prior)  & 76.60 & 4.91 & 173.94 & 7.49 \\
Tebaldi \& West (joint proposal){*}  & 49.43 & 2.58 & 147.66 & 6.18 \\
\hline 
Gaussian State-Space model            & 19.35  & 0.72  & 57.66  & 2.06  \\
Dynamic multilevel model (na\"{i}ve prior)  & 63.29  & 3.35  & 178.43  & 8.09  \\
Dynamic multilevel model (SSM prior)    & 19.93  & 0.87  & 58.20  & 2.39  \\
\end{tabular}
\par\end{centering}
\caption{Performance comparison with Bell Labs data, all results in KB. {*}
Denotes our own improvement on the original algorithm by Tebaldi \&
West.{ Note that the performance of simple gravity and tomogravity are identical on this network due to its star topology.
\label{table:bell}}}
\end{table}

\begin{table}[t!]
\noindent \begin{centering}
\begin{tabular}{l|cccc}
 & \multicolumn{4}{c}{\textbf{CMU}} \\
\hline 
\textbf{Method}  & \textbf{$\bm{L}_{2}$ Error}  & \textbf{SE}  & \textbf{$\bm{L}_{1}$ Error}  & \textbf{SE } \\
\hline 
Gravity  & 499.24 & 11.32 & 1521.66 & 30.09 \\
Tomogravity \citep{zhang:roug:lund:dono:2003} & 310.61 & 5.95 & 1096.38 & 18.68  \\
Locally IID model  & 592.49  & 9.91 & 1169.15 & 17.11  \\
Smoothed locally IID  & ---  & ---  & ---  & ---  \\
Tebaldi \& West (uniform prior)  & ---  & ---  & ---  & ---  \\
Tebaldi \& West (joint proposal){*}  & 167.94  & 4.42  & 712.37  & 14.68  \\
\hline 
Gaussian State-Space model  & 110.47  & 6.19  & 389.14  & 16.72  \\
Dynamic multilevel model (na\"{i}ve prior)  & 311.21  & 6.25  & 1109.68  & 19.58  \\
Dynamic multilevel model (SSM prior)  & 93.42  & 5.20  & 334.74  & 13.64  \\
\end{tabular}
\par\end{centering}
\caption{Performance comparison with CMU data, all results in MB. {*} Denotes
our own improvement on the original algorithm by Tebaldi \& West.{\label{table:cmu}}}
\end{table}

\begin{table}[t!]
\begin{centering}
\begin{tabular}{l|cccc}
 & \multicolumn{4}{c}{\textbf{ABILENE}} \\
\hline 
\textbf{Method} & \textbf{$L_{2}$ Error} & \textbf{SE} & \textbf{$L_{1}$ Error} & \textbf{SE} \\
\hline 
Gravity & 7.51 & 0.05 & 4.05 & 0.02 \\
Tomogravity \citep{zhang:roug:lund:dono:2003} & 5.26 & 0.05 & 3.06 & 0.02 \\
Locally IID Model & 12.17 & 0.07 & 7.03 & 0.03 \\
Tebaldi \& West (joint proposal){*} & 12.74 & 0.07 & 7.44 & 0.04 \\
\hline 
Gaussian State-Space model & 15.48 & 0.09 & 8.42 & 0.05 \\
Dynamic multilevel model (SSM prior) & 14.89 & 0.08 & 8.09 & 0.05 \\
Dynamic multilevel model (Gravity prior) & 4.01 & 0.03 & 2.49 & 0.01 \\
\end{tabular}
\par\end{centering}
\caption{Performance comparison with Abilene data, all results in MB. {*} Denotes
our own improvement on the original algorithm by Tebaldi \& West.{\label{table:abilene}}}
\end{table}

\begin{figure}[t!]
\begin{centering}
\includegraphics[width=\textwidth]{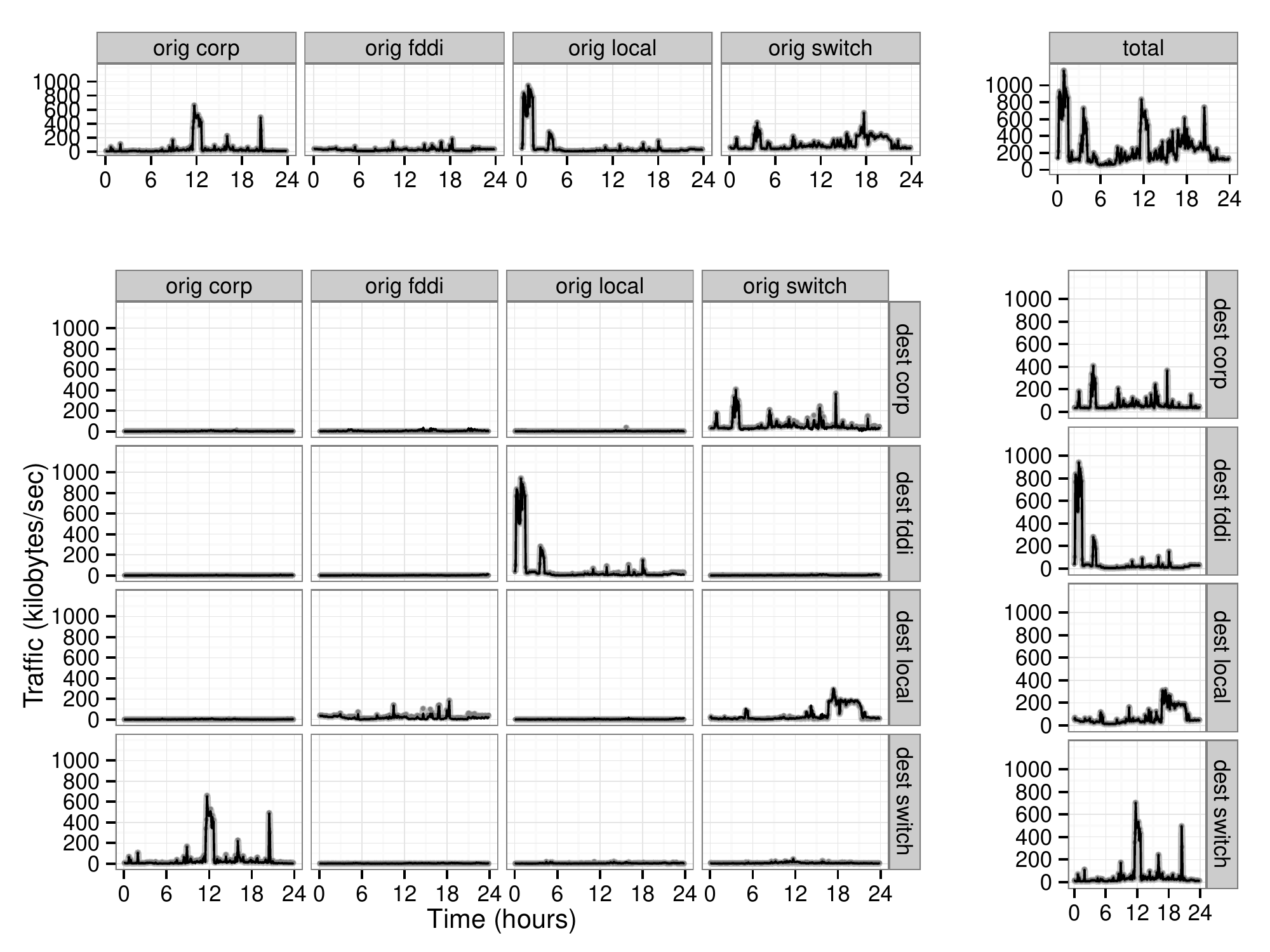}
\caption{Actual and fitted (dynamic multilevel model with SSM prior) traffic for Bell Labs data. Actual traffic is in gray, fitted traffic is in black. \label{fig:bellLabsFitted}}
\end{centering}
\end{figure}

\paragraph{Accuracy.}
We obtain favorable performance for the two-stage approach (corresponding to the bottom rows of Tables \ref{table:bell}--\ref{table:abilene}) for all three data sets.
For the Bell Labs data, mean (time-averaged) $L_1$ and $L_2$ errors are statistically indistinguishable (within 1 SE) between the calibration procedure and dynamic multilevel model with model-based regularization.
Both of these methods reduce average $L_1$ and $L_2$ errors by $60$-$80\%$ compared to the other approaches presented, representing a major gain in predictive accuracy.
For the CMU data, we obtain a reduction of $53\%$ in average $L_2$ error and $44\%$ in average $L_1$ error from the algorithm of \citet{Cao:Dav:Van:Yu:2000} to our multilevel state-space model; we observe $14$-$15\%$ reductions in average $L_1$ and $L_2$ errors from our Gaussian SSM to the multilevel state-space models.
Furthermore, we observe large gains in filtering performance for both data sets compared to inference using na\"ive regularization with our multilevel state-space model.
Overall, our approach outperforms existing methods in accuracy, with greater gains from the Gaussian SSM to the multilevel state-space model in our higher-dimensional setting.
The multilevel state-space model also outperforms gravity-based techniques substantially in these local area networks.

The comparative performance is somewhat different on the Abilene data.
Gravity and tomogravity perform very well on this backbone network, while the locally IID and Poisson models \citep{Cao:Dav:Van:Yu:2000,tebaldiwest1998} perform relatively poorly.
The same is true for the Gaussian SSM and the performance of the dynamic multilevel model suffers when using the SSM for regularization.
Using a simple gravity model for regularization, improves the performance of the dynamic multilevel model, leading to a reduction in mean $L_1$ and $L_2$ error of approximately $20\%$.
The variability in traffic volumes requires a smoothing based on instantaneous dynamics, rather than one based on a  constant scaling fixed across point-to-point routes.

A combination of three factors can be used to understand the performance of our methods, listed on the bottom three rows in Tables \ref{table:bell}--\ref{table:abilene}: explicit dynamics, heavy tails, and regularization. 
The multilevel state-space model fit with the na\"ive SIRM filter incorporates explicit dynamics and heavy tails, but its performance suffers because the distribution of $\bm \lambda$ is quite diffuse.
The network tomography inverse problem requires more constraints and outside information to yield useful solutions.
The Gaussian state-space model used for  calibration purposes is identifiable (as we show in Section \ref{sec:regularization}) and incorporates explicit dynamics, but does not account for heavy tails.
It performs well on the Bell Labs data set, where the distributions of origin-destination traffic are relatively symmetric (on the log scale), but suffers on the extremely heavy tailed CMU traffic volumes.
The multilevel state-space model fit with the two-stage procedure overcomes the ill-posedness of the underlying inverse problem and accounts for heavy tails, attaining comparable performance on the Bell Labs data and outperforming considerably on the CMU data as a result.
However, the Gaussian state-space model is not realistic for every setting.
Gravity methods offer a better source of regularization in backbone networks like Abilene.
This last finding agrees with  previous research \citep[e.g., see][]{zhang:roug:lund:dono:2003,zhang2009}.

\paragraph{Computational stability.}
We found a surprising amount of variation in computational stability among the methods evaluated.
The local likelihood methods \citep{Cao:Dav:Van:Yu:2000} remained stable across data sets, but the original method of \citet{tebaldiwest1998} encountered issues.
On the Bell Labs data, it required a very large number of iterations to obtain convergence (as indicated by the Gelman-Rubin diagnostic); 150,000 iterations per time were used to provide the given estimates, 50,000 of which were discarded as burn-in.
This method failed completely on several time points in the CMU data, becoming trapped in a corner of the feasible region.
Our revised version of the original MCMC algorithm performed better, requiring far fewer iterations for convergence; 50,000 iterations were sufficient for all examined cases, although 150,000 iterations were used for the results presented for comparability.

Our calibration procedure proved computationally stable across all three data sets.
The direct use of marginal likelihood, for maximum likelihood estimation, proved effective  in both the low- and high-dimensional data sets.
The multilevel state-space model was also stable in both settings; however, it proved to be sensitive to some of the alternative specifications mentioned in Section \ref{sec:estimation}.
In particular, major problems arose in experiments using the posterior on $\bm{x}_t$ from the previous time as a proposal (as is common in applications of particle filtering); several time points in the Bell Labs data required over 10 million proposals to obtain a single feasible particle. %
Additional care was needed with the ``move'' step due to similar issues.
Furthermore, the use of a na\"ive, random-walk regularization caused some computational difficulties, as the particles often became extremely diffuse in the feasible region.
Overall, we found inference with the multilevel state-space model computationally stable so long as sampling methods for highly constrained variables ($\bm{x}_t$ in particular) explicitly respected said constraints, proposing only valid values.
Our random directions algorithm (detailed in Algorithm \ref{alg:rda}) handles this task well.

\paragraph{Scalability.}
All methods we evaluated fared well in scalability, including the computationally-intensive, sequential SIRM inference we used for the multilevel state-space model.
On the Carnegie Mellon data set, for each time point, the methods of \citet{Cao:Dav:Van:Yu:2000} required approximately 225 seconds to obtain maximum likelihood estimates with a 23 observation window.
Our modification of the  sampler by \citet{tebaldiwest1998}required approximately 1500 seconds to obtain 150,000 samples for a single time points---the original MCMC sampler required 2250 seconds on average and often did not complete.
In contrast, the simulation-based filtering method for the multilevel state-space model required 270 seconds per time point on average, on the Carnegie Mellon data, and 210 seconds per time on average, on the Abilene data.
Approximately $70\%$ of this time was spent in the move step (MCMC) of the SIRM algorithm, with the vast majority of the remainder used for the random direction sampler.
On the Bell Labs data set, the filtering method required approximately 8 seconds per time, whereas our modification of the sampler by \citet{tebaldiwest1998} required 150 seconds per time---the original algorithm required approximately the same time.

These results are encouraging: the filtering algorithm is reasonably efficient (even written in R) and can run faster than real-time with 144 point-to-point traffic volumes at 5-minute sampling intervals.
We note that the Abilene data set required less computation per time than the Carnegie Mellon data set, even though both involved 144 point-to-point traffic time series.
This is because the effective dimensionality of the the ill-posed linear inverse problem is substantially lower for Abilene; we observe 24 linearly-independent aggregate traffic time series for Carnegie Mellon and 42 for Abilene, yielding 120 and 102 undetermined point-to-point traffic time series, respectively.
The reduction in undetermined dimensions closely tracks the reduction in computation for the  SIRM filter, as expected---the key computations of this sampler scale in complexity as the product of effective dimension and the number of point-to-point time series.
The proposed method takes advantage of the geometric structure of each data set to simplify sampling and guide inference.

Given more efficient implementation and parallelization, which are feasible for all sampling steps, the two-stage approach can scale to networks with many time more nodes.
This is especially true given the sparsity of the traffic on many such point-to-point routes; the prevalence of zero (observable) aggregate traffic volumes in real-world data further reduces the effective size of the deconvolution problem.
By more efficient implementation, we refer to the actual code for the SIRM filter.
In the \texttt{\small networkTomography} package, all parts of the SIRM filter itself are implemented in R.
Moving this algorithm to a compiled language (C or Fortran) and eliminating many memory allocations promises an order of magnitude speedup, based on initial development.

\section{Simulation studies}
\label{sec:simulations}

Here we further explore the relative performance of the methods we applied to the real data in Section \ref{sec:performance} by designing two experiments that involve a mix of simulated and real data.

We sought to understand the source of the performance of the competing inference methods with two experiments. 
In the first experiment, we simulated data from the model and compared the performance of the na\"ive random-walk prior with the two-stage estimation strategy.
The results show that the two-stage inference strategy leads to consistently better computational performance.
The second experiment involves a large-scale simulation study that compares the available methods by combining real origin-destination traffic with simulated network topologies.
We simulate a number of such scenarios by changing the degree of sparsity in the traffic and the complexity of the routing matrix, according to an experimental design that allows for an ANCOVA analysis.
The results show that significant error reduction can be expected by using the  two-stage estimation strategy.

\subsection{Evaluation of the two-stage inference strategy}
\label{sec:eval_2stage}

In the first experiment, we sought to quantify whether the two-stage estimation strategy proposed in Section \ref{sec:2stage} leads to consistently lower $L_1$ and $L_2$ errors, on average, over time and origin-destination routes.

We simulated origin-destination traffic from our multilevel state-space model under 3 network topologies: a 3-node bidirectional chain, a 3-node star topology, and a 4-node star topology,
 corresponding to 2, 4, and 9-dimensional solution polytopes for  inference on $\bm{x}_t$.
For each of these cases, we produced 30 data sets consisting of 300 time-points by drawing from the given multilevel model.
We drew all initial origin-destination traffic from log-Normal distributions with median 500 and geometric standard deviation 6.
Subsequent evolution of these traffic volumes was simulated with $\rho = 0.5$ and all other parameters as described in Section \ref{sec:constants}.
We then computed the implied aggregate traffic volumes for each replicate and fit the multilevel model to these data using the two-stage estimation strategy. 
In addition to the two-stage approach outlined previously, we also performed filtering using our multilevel state-space model with a na\"{i}ve random-walk regularization on the origin-destination traffic; that is, we set $\theta_{1 \, i,t} = 0 \quad \forall (i,t)$ and $\theta_{2 \, i,t} = \log(5)/2$.
This allows us to directly evaluate the effects of regularization and the plausibility of our model.

The primary quantity of interest in our simulations are the relative mean $L_2$ and $L_1$ errors in estimated origin-destination traffic for the na\"ive SIRM particle filter compared to our two-stage method.
The distributions of these relative $L_2$ errors is summarized in Figure \ref{fig:relerror}.
The magnitude of the errors is unchanged for the relative $L_1$ errors.
\begin{figure}[t!]
 \centering
  \includegraphics[width=0.75\textwidth]{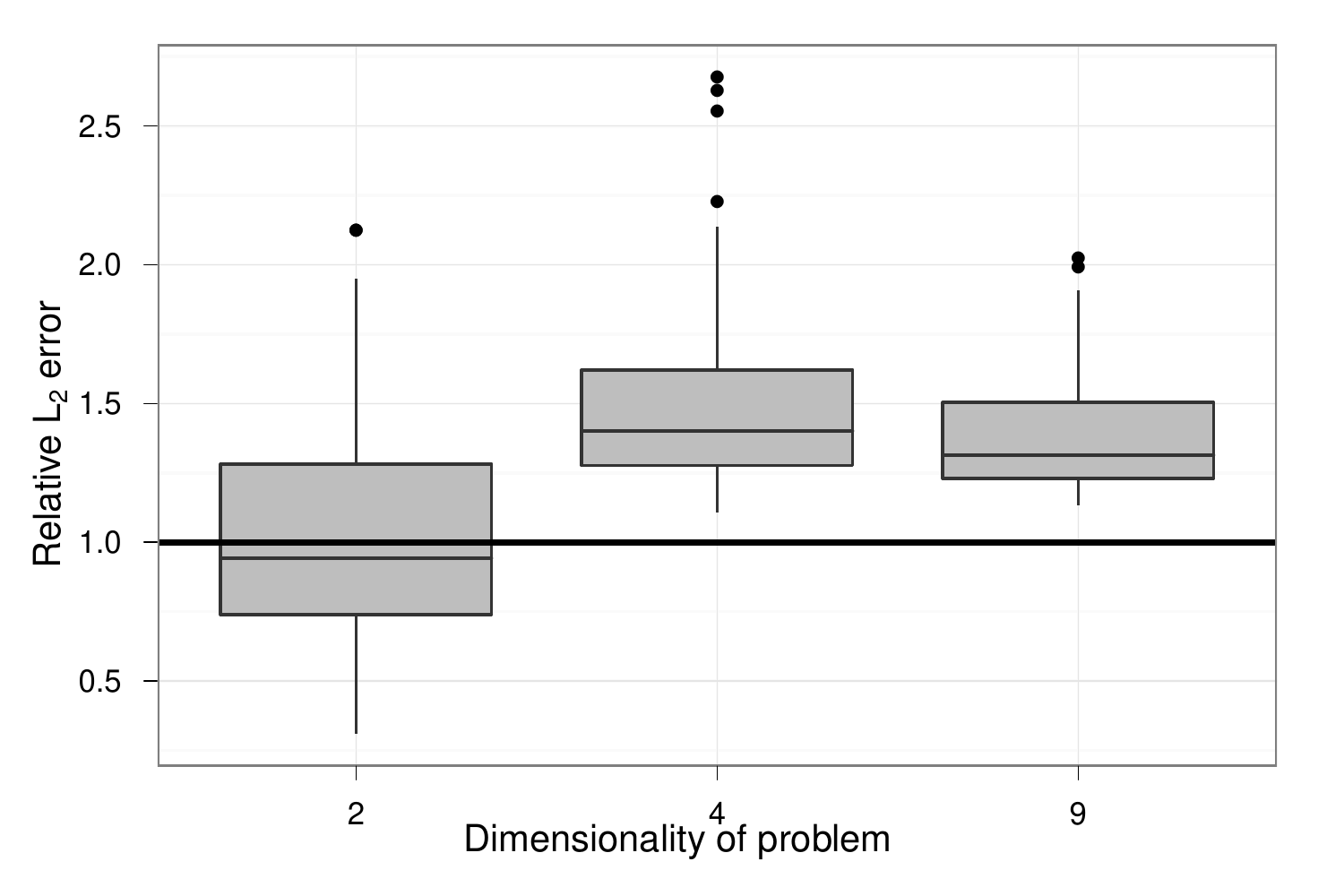}
  \caption{Relative $L_2$ error for na\"ive vs. two-stage method against dimensionality}
  \label{fig:relerror}
\end{figure}

We find that our two-stage method clearly outperforms the na\"ive SIRM particle filter when the dimensionality of the solution polytope is larger than two.
Specifically, we have a mean relative error of $1.09 \pm 0.49$ in two dimensions, increasing to $1.57 \pm 0.45$ in four dimensions and $1.40 \pm 0.26$ in nine dimensions.

Our experience with these simulations also highlighted the computational benefits of the proposed two-stage strategy.
During  iterations with the na\"ive SIRM filter, the {\em effective} number of particles rarely climbed above 2, whereas we typically obtained $10-50$ with the two-stage approach, with an equivalent {\em actual} number of particles.
With real data, we expect additional benefits from the two-stage estimation; in particular, we would expect it to have greater robustness to model misspecification.
Essentially, we are using information from a simpler model to rein-in potential issues with the more delicate multilevel model. This strategy is expected to stabilize inference in the latter and limit problems of non-identifiability.
These intuitions are further explored in Section \ref{sec:eval_methods}.

\subsection{Quantifying the factors that affect performance}
\label{sec:eval_methods}

In the second experiment, we sought to quantify the performance of the proposed method and existing methods relative to a the simple IPFP as baseline, explicitly controlling for the size for the network and the sparsity of the origin-destination traffic volumes over time.
 
To get a better sense of the relative performances on real data, we designed experiments using real traffic time series.
We selected a subset of the 1024 most active origin-destination traffic volumes from the CMU data set as the population of time series for designing these experiments.
The remaining CMU origin-destination pairs had negligible amounts of traffic.
We used realistic but artificially designed routing matrices based on star topologies with 3, 4, 5, and 9 nodes, and we generated 10 data sets for each topology by randomly sampling from the population of time series.

This combination of active origin-destination traffic volumes and star topologies defines a difficult scenario for inference underlying the network tomography problem.
In cases with extremely sparse OD traffic time series, we can deterministically infer many of them from aggregate traffic, since a large portion of measurements will be zero.
Hence, using high  OD traffic volumes reduces the number of cases with deterministic solutions and puts the methods to a stringent test.

The star topology also creates a stringent test for our method.
As noted by \citet{Singhal2010}, the star case is not worst with respect to, for instance, mean identifiability.
However, it does provide the fewest measured aggregate traffic time series for a given number of OD routes.
That is, given $d$ nodes with $n = d^2$ OD routes, and assuming all communications are bidirectional, any connected topology will have at least $2d$ aggregate time series.
Star topologies attain this lower bound, maximizing the dimension of the feasible region for OD traffic volumes given observed aggregate traffic.
This dimension is the relevant measure of difficulty for  inference in network tomography, so the star topology provides an appropriate benchmark.

On each generated data set, we ran five methods to estimate the OD traffic volumes: IPFP, the locally IID method of \citet{Cao:Dav:Van:Yu:2000}, our implementation of the Poisson model from \citet{tebaldiwest1998}, and the multilevel  model with a na\"ive random-walk prior, the proposed calibration procedure, and the proposed two-stage inference strategy.

The outcomes of interest are average $L_1$ and $L_2$ errors, over time and origin-destination routes.
We performed an ANCOVA analysis using the average errors from each of our experiments to understand what drives performance in this problem.
The primary factor of interest for this analysis is the method used.
We coded network size as a factor with three levels.
We included sparsity as a covariate as well to capture the effect of having deterministically zero observed traffic as described above.
Sparsity enters the ANCOVA analysis as $\log_{10}($average proportion of measured traffic volumes that are not deterministically 0$)$.

\begin{table}[t!]
\caption{Log-linear ANCOVA model for simulation study; $\log_{10}(L_1$ errors$)$ is outcome. \label{table:simAncova}}
\label{tab:ancova}
\begin{center}
\begin{tabular}{rrrrr}
  \hline
                       & Estimate & Std. Error & t value & Pr($>$$|$t$|$) \\ 
  \hline
(Intercept)            &  6.526 & 0.115 & 56.60 & 0.000 \\ 
  Dim=4                &  0.058 & 0.111 &  0.53 & 0.599 \\ 
  Dim=5                &  0.908 & 0.108 &  8.42 & 0.000 \\ 
  Dim=9                &  2.103 & 0.108 & 19.47 & 0.000 \\ 
  Locally IID method   &  2.205 & 0.130 & 17.03 & 0.000 \\ 
  Tebaldi \& West      & -0.116 & 0.130 & -0.89 & 0.373 \\
  Na\"ive prior        & -0.021 & 0.130 & -0.16 & 0.870 \\ 
  Calibration model    & -0.241 & 0.130 & -1.86 & 0.065 \\ 
  Two-Stage Inference  & -0.244 & 0.130 & -1.88 & 0.061 \\  
  $\log_{10}$ sparsity & -8.237 & 3.179 & -2.59 & 0.010 \\ 
   \hline
\end{tabular}
\end{center}
\end{table}

Table \ref{tab:ancova} summarizes the results of this analysis for $L_1$ errors.
We used a log-linear model for this analysis; initial diagnostics suggested that its variance structure is more appropriate for this experimental data than an untransformed model.
We checked for interactions between dimensionality and method, but found no support in the data ($p=0.996$ for a standard F test).
It appears that any such interactions would require larger networks to identify.

We find that the proposed methods significantly outperform the baseline IPFP approach, while the locally IID method performs significantly worse. The performance of the Bayesian model by \citet{tebaldiwest1998} and the multilevel methods fit with the na\"ive SIRM filter are inconclusively better then IPFP.
The multilevel model consistently under-performs the model of \citet{tebaldiwest1998}, as well as our other approaches, when used with na\"ive random-walk priors.
However, the calibration procedure alone performs quite well despite its simplicity.
The performance of the proposed  two-stage approach is not significantly higher than that for the calibration procedure in this setting, which suggests that our calibration procedure is the driver of our performance improvements at this scale.
This agrees with our empirical findings with the Bell Labs data set and is compatible with the observed increase in performance at larger scales, on the CMU data set, in Table \ref{table:cmu}.
These results are essentially unchanged (qualitatively and quantitatively) when we substitute $L_2$ for $L_1$ errors.

\section{Concluding remarks}
\label{sec:remarks}

In this paper, we address the problem of (volume) network tomography in a dynamic filtering setting.
For this application, we develop a novel  approach to this problem by combining a new multilevel state-space model that posits non-Gaussian marginals and nonlinear probabilistic dynamics with a novel two-stage inference strategy.
Our results and analyses substantiate several claims and suggest points for further discussion.

We analyzed three networks (Bell Labs, Carnegie Mellon, and Abilene) which span a wide range of dimensions, with different inference methods.
The results demonstrate a clear improvement of the proposed methodology over previously published methods in estimating point-to-point traffic volumes.
Comparison between Bell Labs and Carnegie Mellon results suggests that this gain increases with the dimensionality of the problem.
Our results with the Abilene network highlight the differences between local-area  and Internet2 backbone networks.
They differ in both topology and traffic dynamics, requiring different approaches to regularization.

\subsection{Modeling choices}
\label{sec:remarks_model}

Our model explicitly captures two critical feature of point-to-point traffic---namely, skewness and temporal dependence.
The substantial improvements in accuracy over existing methods can be attributed to these modeling improvements, to a large extent.
The gains in computational efficiency are responsible for the improvements in accuracy only in part, as we discuss below.
Previous modeling approaches have accounted for skewness \citep{tebaldiwest1998}, but never for explicit temporal dependence of the point-to-point traffic volumes.
The inter-temporal smoothing algorithm of \citet{Cao:Dav:Van:Yu:2000} includes elements of temporal dependence; however, the model assumes temporally independent time series and the dependence is imposed indirectly having observations  within a time window  that contribute to inference at any given time point.
In summary, previous work has not accounted for the range of properties addressed by our model.
The performance gains that stem from our modeling assumptions are clear on the three data sets tested; in particular, the gains from the model of \citet{tebaldiwest1998} to the Gaussian SSM and to the dynamic multilevel model for the CMU data set reinforce the benefits of positing realistic dynamics  in this problem.

We chose to increase the probability of near-zero traffic volumes using a truncated Gaussian error, rather than a log-Normal or Gamma distribution whose support is naturally on the non-negative reals.
From a computational perspective, given that the particle filter involves a Metropolis step, the truncated Gaussian error is not particularly more tractable than log-Normal or Gamma errors.
However, the truncated normal increases the probability assigned to any $[0,\epsilon)$  interval, relatively to Gamma or log-Normal with same mean and scale.
To obtain similar behavior with these non-truncated noise structures would either be impossible, e.g., with a log-Normal distribution, or require ad-hoc re-parametrization linking the shape (intuitively, variability) and center (magnitude of point-to-point traffic volumes), e.g.
with a Gamma distribution.
We believe that using a truncated distribution with the mode calibrated using the underlying intensity process ($\bm \lambda_t$ in our notation) provides a cleaner solution.
However,  several alternatives are viable.

Fundamentally, we estimate the point-to-point traffic volumes by projecting  aggregate traffic volumes onto the latent space point-to-point traffic inhabits; that is, we want to compute $\mathbb{E}[\bm x_t \mid \bm y_t]$ under a given probabilistic structure.
The relative variability of origin-destination flows over time plays a large role in inference, as there is typically a strong relationship between the mean and variance of point-to-point traffic.
Because of this, simple methods that do not model variability explicitly and realistically, including Moore-Penrose generalized inverse \citep{harv:2008} and independent component analysis \citep{Hyva:Karh:Oja:2003}, are of limited use in this context.
Surprisingly, however, we found that the iterative proportional fitting procedure \citep{fien:1970} often produces  reasonable solutions, likely because each such solution corresponds to a feasible set of origin-destination flows. \citet{liantaftyu:2006} have recently capitalized on this finding.
In contrast, our approach, models this variability with a probabilistic structure, improving inference by using this additional information.

\subsection{Applicability to more complicated routing schemes}
\label{sec:remarks_extensions}

Routing schemes other than deterministic routing, such as probabilistic routing and dynamic load-balancing, 
can be subsumed within the modeling framework we developed in Section \ref{sec:model}. 

A probabilistic routing scheme would be captured by a probabilistic $A$ matrix. 
 Column $j$ of the $A$ matrix specifies the $m$ proportions of point-to-point traffic volume $j$ that is measured at the $m$ counters.
In deterministic routing, each point-to-point traffic volume either contributes to a counter or does not, $A_{ij} \in \{0,1\}$.
In probabilistic routing,  each point-to-point traffic volume contributes to multiple counters with different probabilities, $A_{ij}\in [0,1]$.

Routing schemes that carry out dynamic load balancing to manage congestion would be captured by a time- and traffic-dependent matrix, $A(t)$.
However, congestion can only be monitored at the router level, in practice, using the {\em observed} aggregate traffic counters.
Thus the counters can be considered as covariates, and the routing matrix can be modeled as a function of these covariates,  \[
 A(t) = 
 \left\{
  \begin{array}{rl} 
   A^{High} & \hbox{ if } \bm y_t \hbox { is high} \\
   A^{Low} & \hbox{ if } \bm y_t \hbox { is low}, 
  \end{array}
 \right.
\]
More nuanced specifications are possible.
The key point is that implemented routing and switching protocols can only make use of the measurements collected at the router level, $\bm y_{t}$.

\subsection{Computational challenges and scalability}
\label{sec:remarks_computing}

Our inference method is computationally efficient and scales to larger inference problems than have been previously addressed.
The problem is fundamentally $O(n-m)$ for each time point, so we cannot hope to do better than quadratic scaling in the number of nodes $d$ in our network (excepting cases where a few aggregate traffic measurements are zero).
Despite the sophistication of our dynamic multilevel model, the sequential Monte Carlo technique allows for inference in better than real-time for a network with $144=n\approx O(d^2)$ origin-destination routes and $26=m\approx O(d)$ router level measurements.
As SIRM is the portion of the inference algorithm that would be used in an online application, we have demonstrated a scalable technique for inference with a model of greater complexity and realism than has been previously found in the literature.

These gains in computational efficiency also reduce numerical instability and are ultimately responsible for additional gains in accuracy.
Computational issues can be appreciated by considering the amount of effort needed to maintain $\Cov \bm e_t$ positive-definite in the EM algorithm of \citet{Cao:Dav:Van:Yu:2000}, especially when the traffic approaches zero.
We can see some artifacts in the corresponding point-to-point traffic  estimates in Supplemental Figure S1 (green lines) due to this issue in low traffic OD routes, e.g., see ``orig local $\rightarrow$ dest local''.
We further quantified the effects of computational efficiency on inference in the original methods by \citet{tebaldiwest1998} in Table \ref{table:bell} by comparing the uniform prior and component-wise proposal to the joint proposal we developed.
In addition to the gains in speed and convergence discussed in Section \ref{sec:performance}, we observe a large reduction in average error from the component-wise to joint proposal ($35\%$ in $L_2$ error, $15\%$ in $L_1$) which correspond to no changes in priors nor to the underlying model.

The random directions algorithm %
plays an important role 
in the sequential Monte Carlo sampler.
Without such an algorithm to sample directly from the feasible region of point-to-point traffic volumes, we would be forced to use a na\"ive proposal distribution.
In our testing, such distributions proved extremely problematic (as discussed in Section \ref{sec:estimation}), especially in higher dimensional settings.
In such cases, intelligent sampling techniques that fully utilize the geometric constraints implied by the data are necessary to obtain high accuracy and efficiency.
This is particularly salient comparing the results presented here to our previous work \citep{Airo:Falo:2004}; the method presented there suffered from computational instabilities.
It was hampered both by inefficient sampling on the feasible space of solutions and by  distributional assumptions that assigned low probability to low point-to-point traffic volumes.

Multi-modality of the marginal posterior on point-to-point traffic volumes $x_{it}$ appears negligible in our analyses.
Our theoretical results suggest that multi-modality in these problems is limited to that due to flat regions in the case of real-valued traffic volumes and  models assuming independent traffic.
We suspect that the issues with multi-modality discussed in the literature are due to mainly to the inefficiency of the samplers.
This further reinforces the importance of efficient computation for inference in highly complex, poorly identified settings; even a simple model can falter on poor computation, and complex models require great care to obtain reliable inference results.

\subsection{A novel two-stage inference strategy in dynamic multilevel models}
\label{sec:remarks_2stage}

As previously argued by \citet{tebaldiwest1998} in the static setting, informative priors are essential to identify the peak in the likelihood that correspond to the true configuration of point-to-point traffic.
This conclusion holds in the dynamic setting we consider, despite the additional information that temporal dependence makes relevant for the inference of  point-to-point traffic volumes.
The technical choices at issue are:
 (i) where to find such information---it is not obvious in the data;
 (ii) what parameters are most convenient to put priors on; and
 (iii) how do we translate the additional information into prior information for the chosen parameterization.

We use a simple identifiable model %
to find rough estimates of the point-to-point traffic volumes (in our first stage).
These estimates provide some information about where the point-to-point traffic volumes live in the space of feasible solutions, enabling us to identify high-probability subsets of the feasible region at each time point before embarking on computationally-intensive sampling.
The expected benefits from this strategy are larger in higher dimensions, as the proportion of the feasible region's volume with high posterior density decreases rapidly with dimensionality (the classical curse of dimensionality).
Practically, informative priors increase the efficiency of the particle filter by focusing the sampler on promising regions of the parameter space, avoiding wasted computation and improving inference.

In order to pass the first-stage information to the (non-Gaussian) dynamic multilevel model, we moved away from a standard linear state-space formulation with additive error to a non-linear formulation with stochastic dynamics, which effectively provides a multiplicative error (second-stage).
The stochastic dynamics assumed for $\bm \lambda_t$ provide our parameters of choice for incorporating information obtained in the first stage of estimation. Prior calibration for the dynamics of $\bm \lambda_t$ guides inference without placing too tight of a constraint on the inferred point-to-point traffic volumes.
In Section \ref{sec:calibration} we describe how we solve the problem of  translating the first-stage estimates into priors for the parameters of the second-stage model.
Essentially, we trade-off the need to pass as much information as possible from the first stage of estimation to the second with the controlled inaccuracy of the first-stage modeling assumptions.

Our two-stage procedure suggests a more general strategy for inference in dynamic hierarchical models with weak identifiability constraints.
The use of simpler models to guide inference in more sophisticated, realistic models, and (if necessary) to provide regularization, can result in large  performance gains, as we have demonstrated here.
This strategy implements a principled approach to {\em cutting corners} \citep{meng:2010}.
Ultimately, the proposed two-stage approach has allowed us to use a sophisticated generative model for inference, leveraging the power of multilevel analysis, while maintaining efficiency for real-time applications.

\newpage
\phantomsection
\section*{Acknowledgments}
\addcontentsline{toc}{section}{Acknowledgments}

The authors wish to thank Jin Cao and Matthew Roughan for providing point-to-point traffic volumes, and the referees for providing valuable suggestions.
This work was partially supported 
by the Army Research Office Multidisciplinary University Research Initiative under grant 58153-MA-MUR, and  
by the National Science Foundation under grants IIS-1017967, DMS-1106980 and CAREER IIS-1149662, all to Harvard University.
The views and conclusions contained in this document are those of the  authors and should not be interpreted as representing the official policies, either expressed or implied, of the Army Research Office, the National Science Foundation, or the U.S. government.

\clearpage

\bibliographystyle{plainnat}

\appendix
\clearpage
\section{Efficient inference for the Gaussian state-space model}
\label{supp:ssm}

Inference on the latent point-to-point time series $\bm x_t$ in the Gaussian state-space model specified by Equation \ref{eq:gssm} can be carried out with standard Kalman filtering and smoothing.
Estimating the constants underlying the model via maximum likelihood can be approached with two strategies: Expectation-Maximization \citep[EM,][]{Demp:Lair:Rubi:1977} and direct numerical optimization. 
The EM approach for unconstrained Gaussian state-space models  requires Kalman smoothing for the E-step and maximization of the expected log-likelihood for the M-step \citep{gh1996em}. While the E-step is straightforward and efficient to calculate using standard algorithms, the M-step requires expensive numerical optimization in our case. Due to the constraints on $Q$ and $\Cov \bm e_t$ and the dependence of the observations, there is no analytic form for the maximum of the expected log-likelihood.
Since the EM requires numerical optimization, we decided to use direct numerical optimization on the marginal likelihood obtained from the Kalman smoother.
This amounts to maximizing
\[
\textstyle
 \ell(Y \mid \theta) = -\sum_t \log | \hat \Sigma_t | -
   \frac{1}{2} \sum_t (\bm y_t - \bm{\hat y_t})'\hat \Sigma_t^{-1}(\bm y_t - \bm{\hat y_t}) 
\]
where $\bm{\hat y_t}$ and $\hat \Sigma_t$ are the estimated mean and covariance matrices from the Kalman smoother. With a fast (Fortran) implementation of the Kalman iterations, this approach yields favorable run-times and stable results.

Such efficient computation is, however, sensitive to certain modeling decisions.
Enforcing a steady-state starting value within each window is particularly useful.
Formally, suppose that we index each window of width $w$ with $t = 1, \ldots, w$.
We must specify a starting value $\bm x_0$ for each window.
By linking $\bm x_0$ to $\bm \lambda$, we can simplify our computation.
For a given choice of $\bm \lambda$, the steady-state mean of the process specified in Equation \ref{eq:gssm} is $\frac{1}{1-\rho} \bm \lambda$.
Fixing $\bm x_0 = \frac{1}{1-\rho} \bm \lambda$ allows us to reduce the dimensionality of Equation \ref{eq:gssm}.
Formally, we can rewrite it as
\begin{eqnarray}
    \bm{x}_t - \frac{1}{1-\rho} \bm \lambda & = &  F (\bm{x}_{t-1} - \frac{1}{1-\rho} \bm \lambda) + \bm{e}_t \\
     \bm{y}_t & = & A  (\bm{x}_t + \frac{1}{1-\rho} \bm \lambda) + \bm{\epsilon}_t \ . \nonumber  \vspace{3pt}
\end{eqnarray}
This reduces the dimensionality of our state variable by a factor of two, greatly accelerating all Kalman filter and smoother calculations.
As said calculations have complexity quadratic in the problem's dimensionality, this reduces the computational load by approximately a factor of 4.

\section{Actual vs. fitted traffic for Bell Labs and CMU data}

\renewcommand{\thefigure}{S\arabic{figure}}
\setcounter{figure}{0}

In this appendix, we show the actual vs. fitted OD flows for the methods presented previously. We plot all OD flows for the Bell Labs data and the 12 most variable OD flows for CMU. Ground truth is always in black, with estimated values in color. Figures \ref{suppfig:cao1router} through \ref{suppfig:naiive1router} cover the Bell Labs data, and Figures \ref{suppfig:caoCMU} through \ref{suppfig:naiiveCMU} cover the CMU data.

\begin{figure*}[h]
\centering
\includegraphics[width=7in,angle=90]{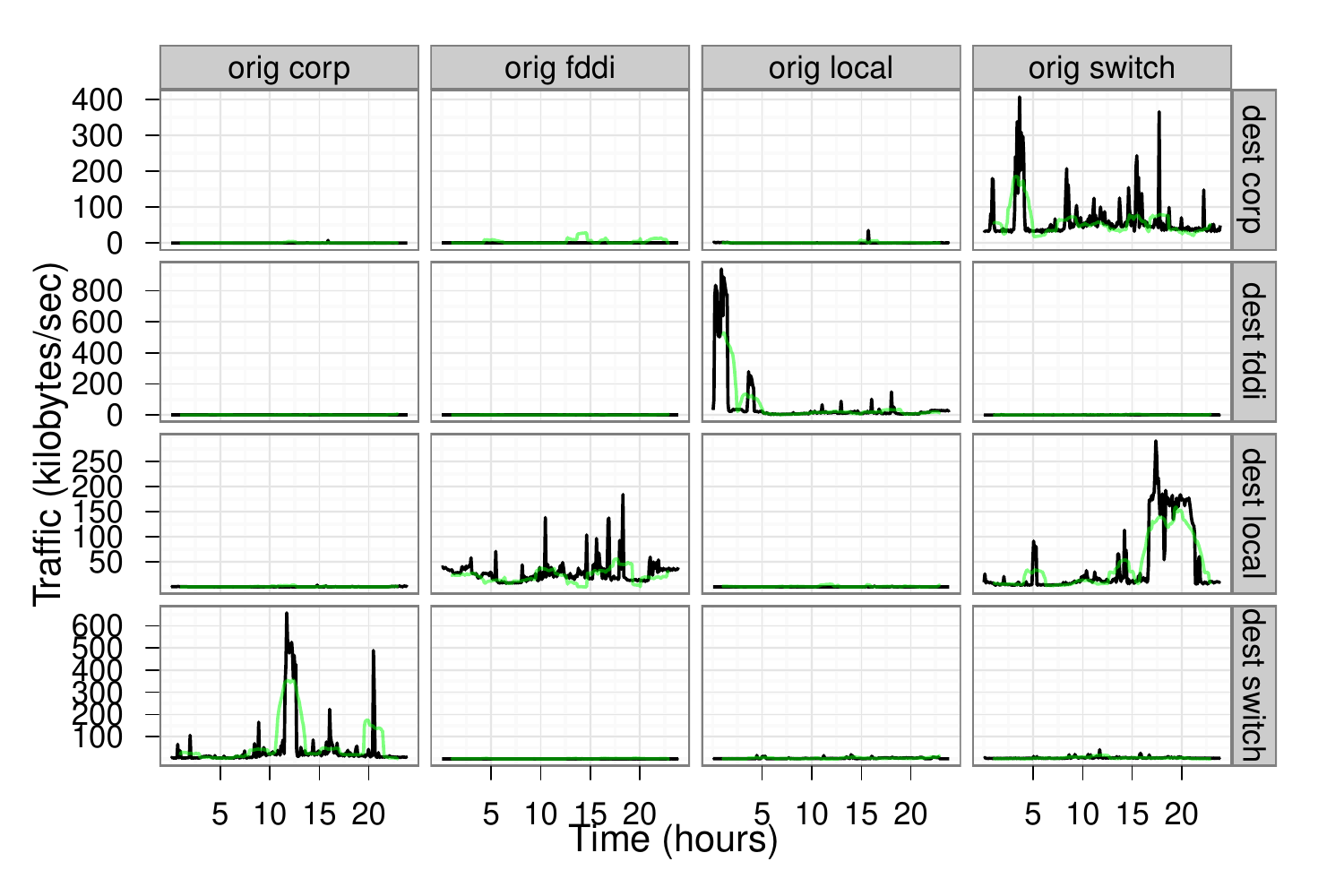}
\caption{Fitted values vs. ground truth for Bell Labs data. Ground truth in black; Locally IID model in green.}
\label{suppfig:cao1router}
\end{figure*}

\begin{figure*}[h]
\centering
\includegraphics[width=7in,angle=90]{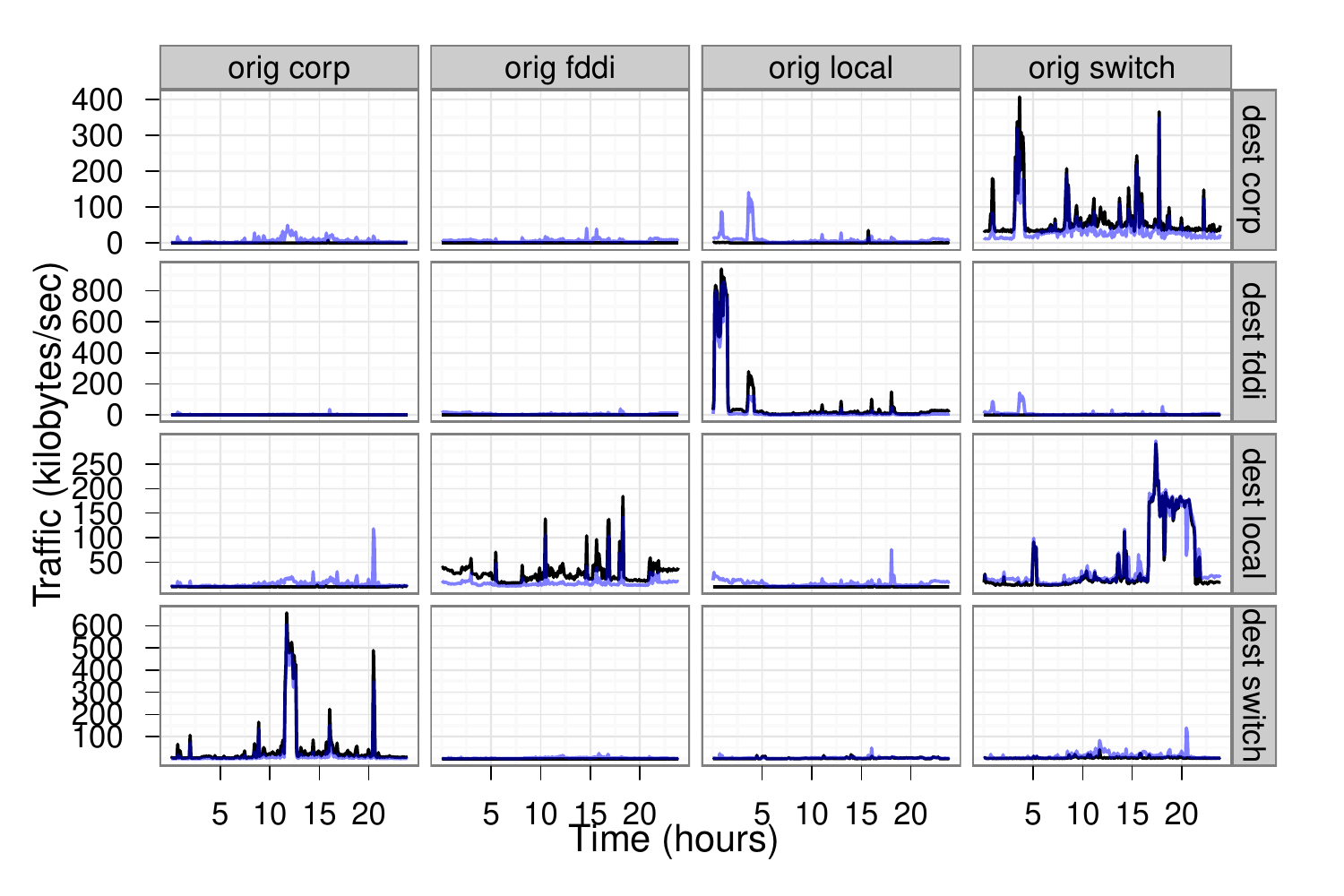}
\caption{Fitted values vs. ground truth for Bell Labs data. Ground truth in black; Tebaldi \& West (joint proposal) in blue.}
\label{suppfig:tw1router}
\end{figure*}

\begin{figure*}[h]
\centering
\includegraphics[width=7in,angle=90]{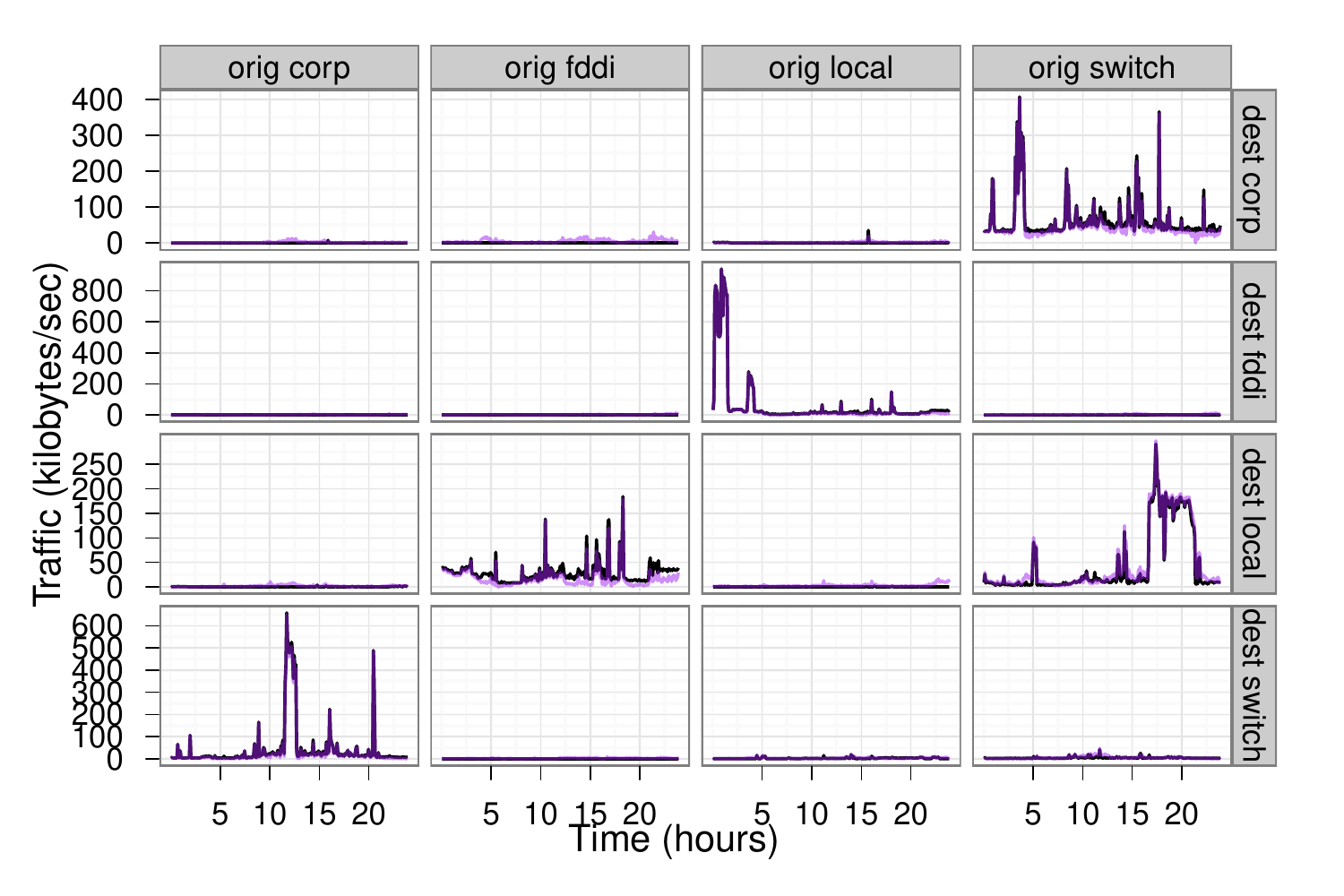}
\caption{Fitted values vs. ground truth for Bell Labs data. Ground truth in black; Calibration model (stage 1) in purple.}
\label{suppfig:ssm1router}
\end{figure*}

\begin{figure*}[h]
\centering
\includegraphics[width=7in,angle=90]{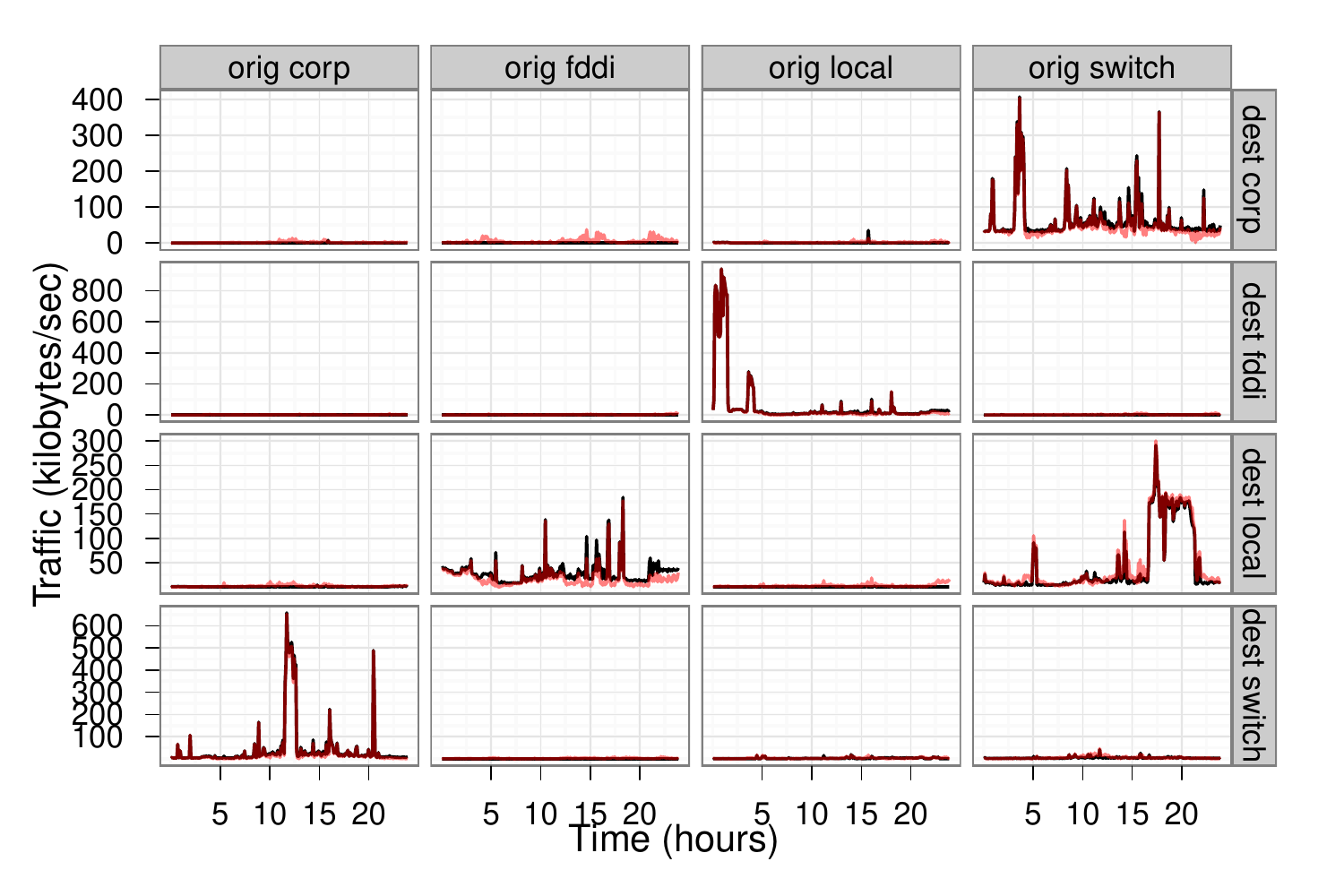}
\caption{Fitted values vs. ground truth for Bell Labs data. Ground truth in black; Dynamic multilevel model (stage 2) in red.}
\label{suppfig:dynamic1router}
\end{figure*}

\begin{figure*}[h]
\centering
\includegraphics[width=7in,angle=90]{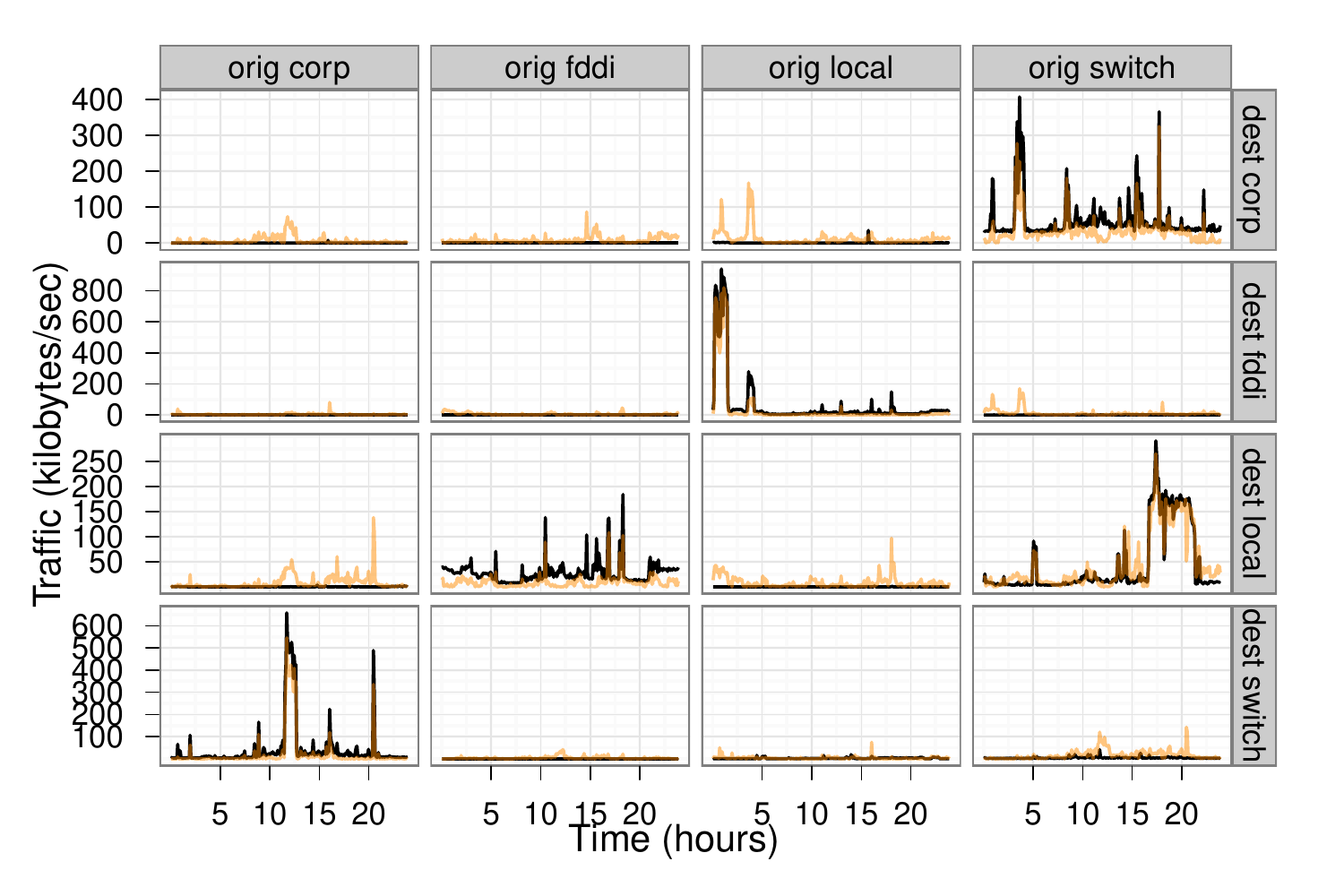}
\caption{Fitted values vs. ground truth for Bell Labs data. Ground truth in black; Na\"ive prior in orange.}
\label{suppfig:naiive1router}
\end{figure*}

\begin{figure*}[h]
\centering
\includegraphics[width=7in,angle=90]{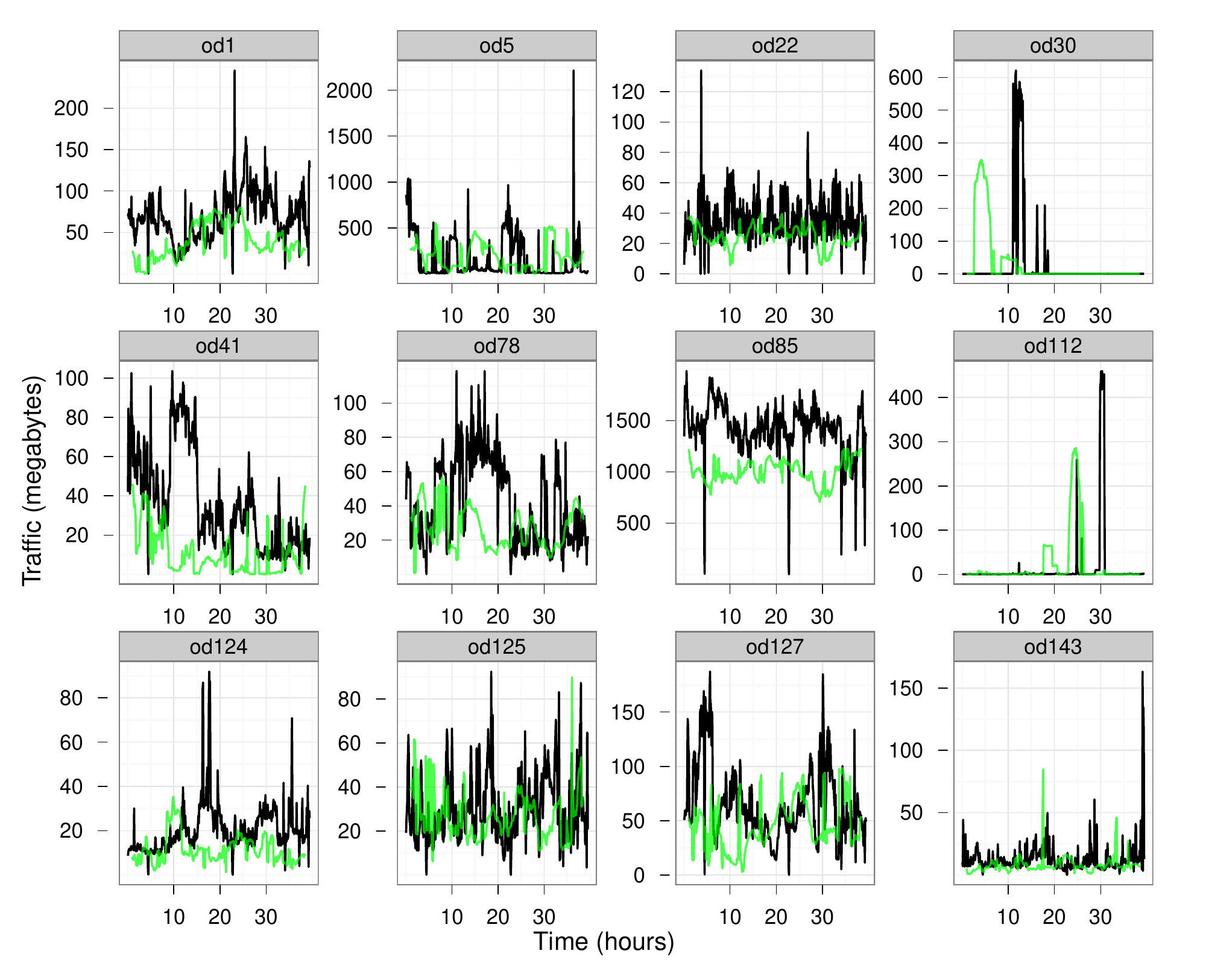}
\caption{Fitted values vs. ground truth for CMU data. Ground truth in black; Locally IID model in green.}
\label{suppfig:caoCMU}
\end{figure*}

\begin{figure*}[h]
\centering
\includegraphics[width=7in,angle=90]{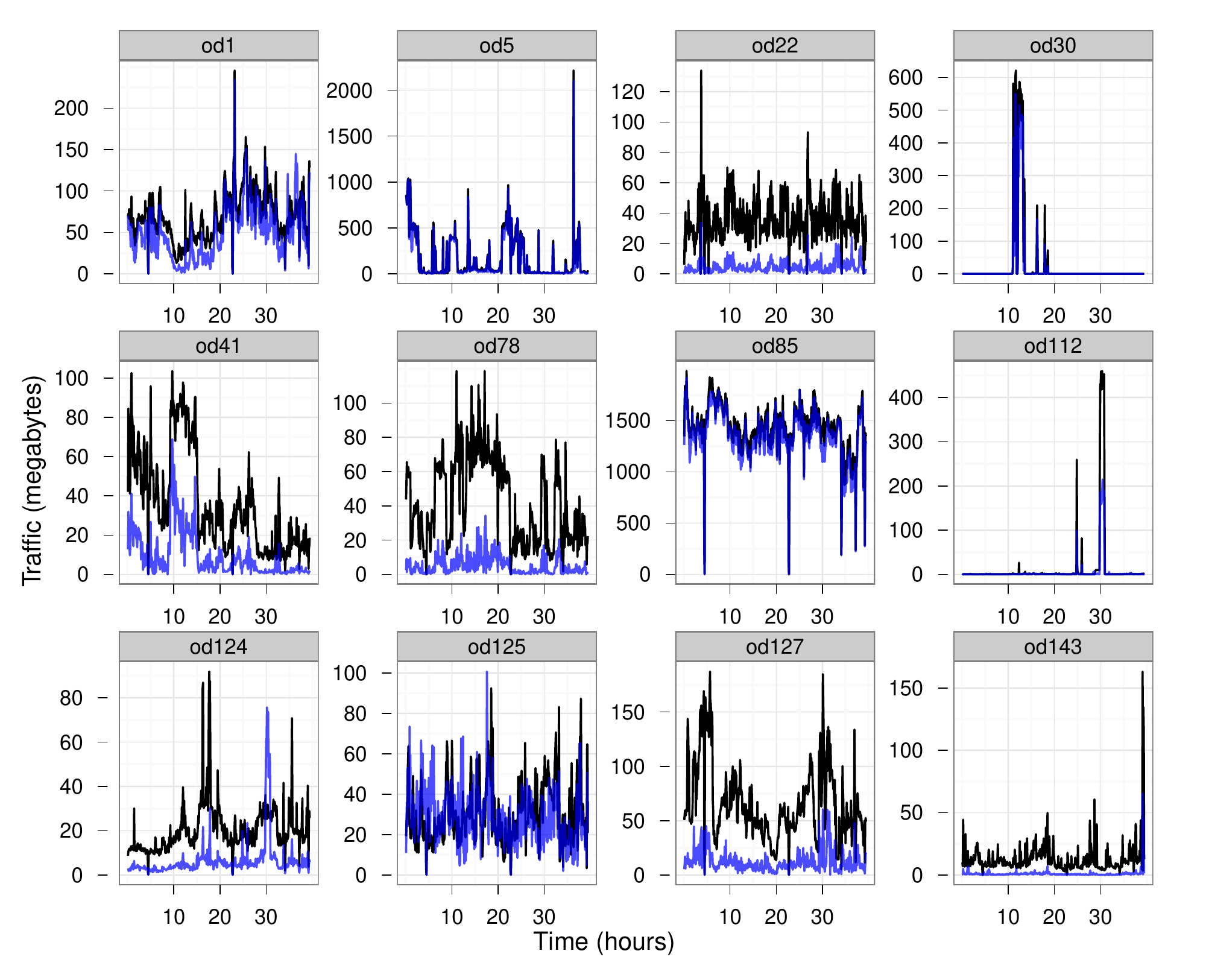}
\caption{Fitted values vs. ground truth for CMU data. Ground truth in black; Tebaldi \& West (joint proposal) in blue.}
\label{suppfig:twCMU}
\end{figure*}

\begin{figure*}[h]
\centering
\includegraphics[width=7in,angle=90]{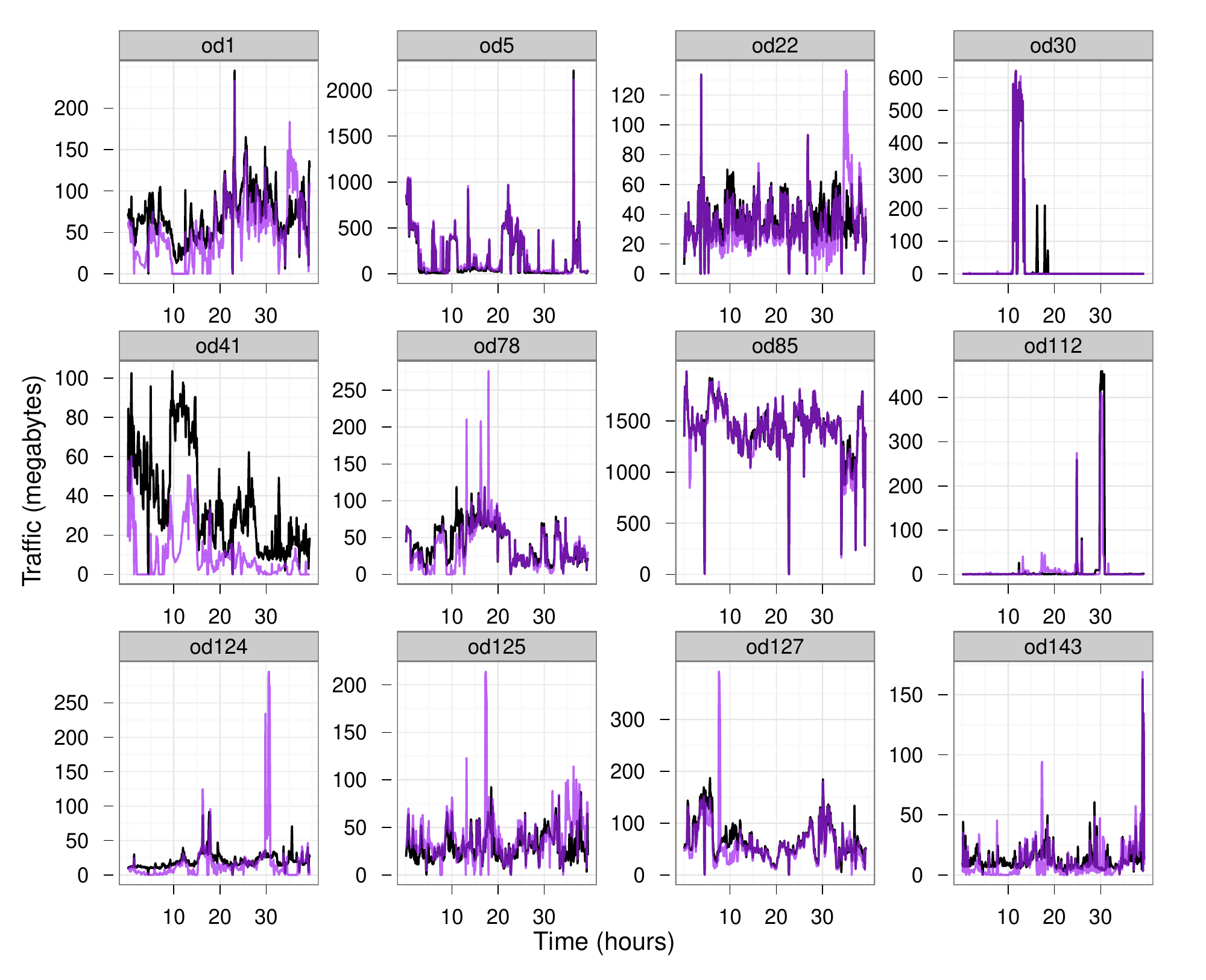}
\caption{Fitted values vs. ground truth for CMU data. Ground truth in black; Calibration model (stage 1) in purple.}
\label{suppfig:ssmCMU}
\end{figure*}

\begin{figure*}[h]
\centering
\includegraphics[width=7in,angle=90]{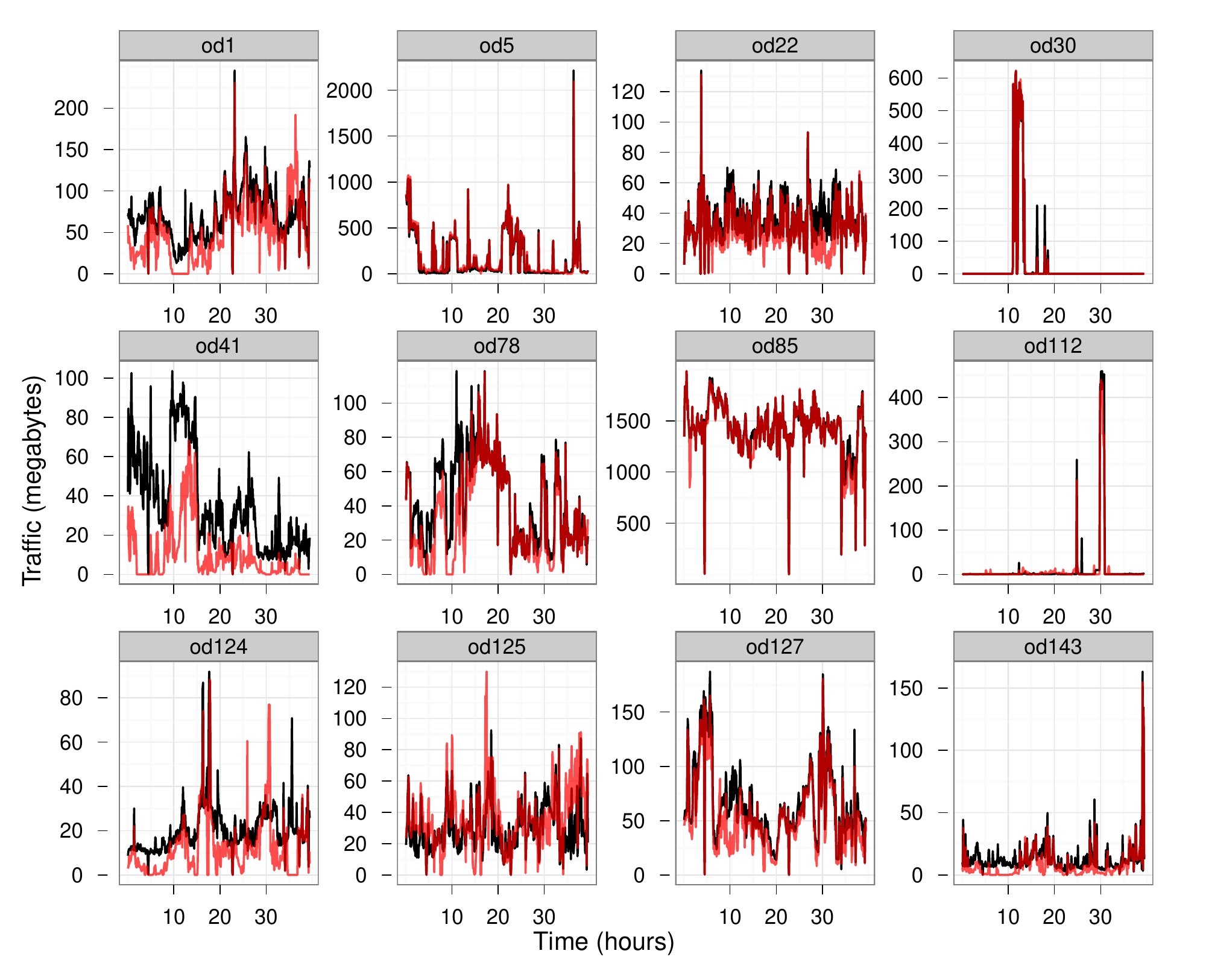}
\caption{Fitted values vs. ground truth for CMU data. Ground truth in black; Dynamic multilevel model (stage 2) in red.}
\label{suppfig:dynamicCMU}
\end{figure*}

\begin{figure*}[h]
\centering
\includegraphics[width=7in,angle=90]{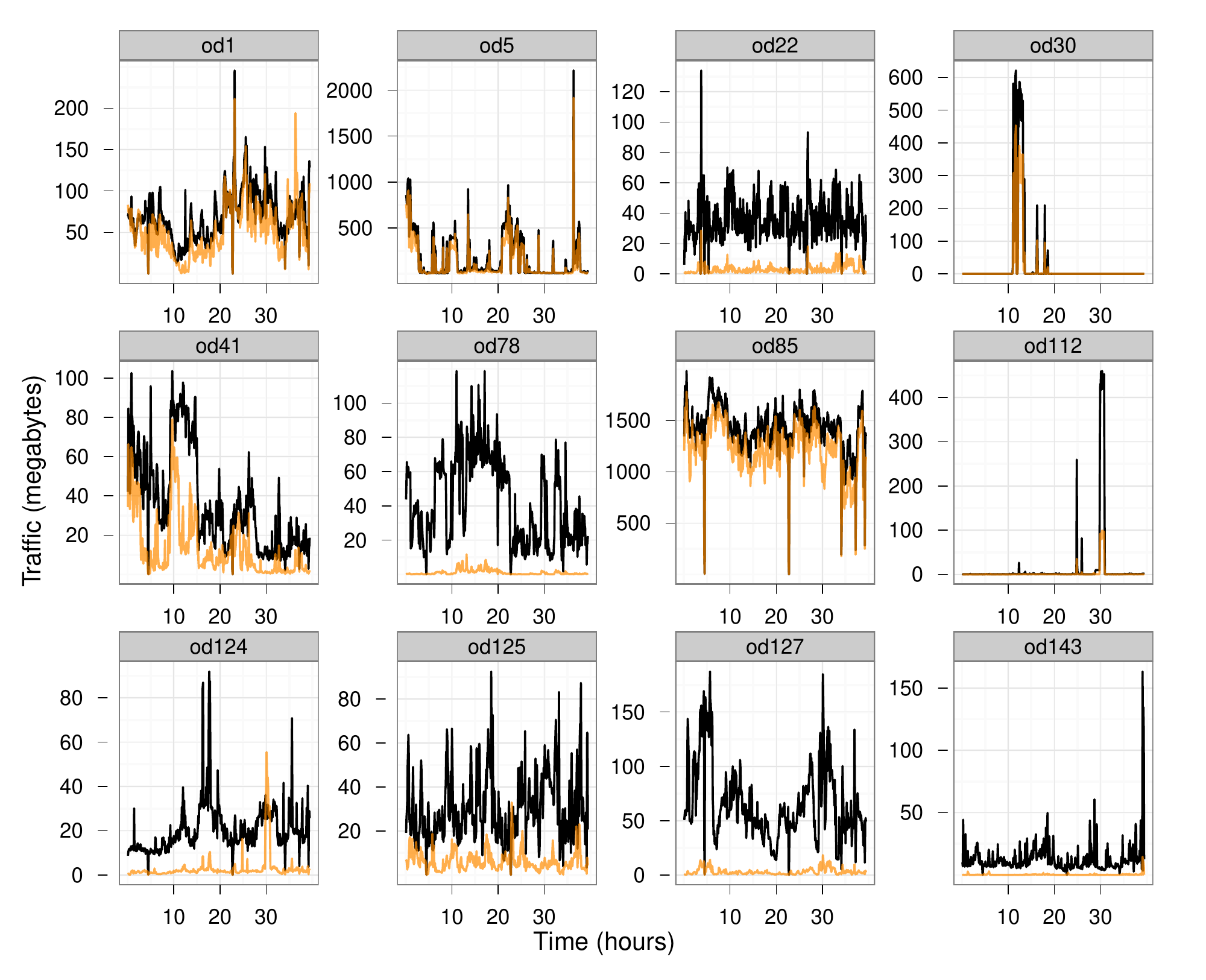}
\caption{Fitted values vs. ground truth for CMU data. Ground truth in black; Na\"ive prior in orange.}
\label{suppfig:naiiveCMU}
\end{figure*}

\end{document}